\def\be{\begin{equation}}
\def\ee{\end{equation}}
\def\bea{\begin{eqnarray}}
\def\eea{\end{eqnarray}}
\def\bma{\begin{mathletters}}
\def\ema{\end{mathletters}}
\def\P{{\cal P}}
\def\A{{\cal A}}
\def\q0{\underline{0}}
\def\H{{\cal H}}
\def\Z{\mathbb{Z}}
\def\P{{\cal P}}
\def\C{{\mathbb C}}
\def\id{{\mathbb I}}
\def\H{{\cal H}}
\def\B{{\cal B}}
\def\N{\mathbb{N}}
\def\tr{\mbox{tr}}
\def\hier{\mathbb{H}}
\def\Ind{{\cal I}}
\def\one{\leavevmode\hbox{\small1\normalsize\kern-.33em1}}
\def\bra#1{\langle#1|} \def\ket#1{|#1\rangle}
\def\proj#1{\ket{#1}\!\bra{#1}}
\newtheorem{theo}{Theorem}
\newtheorem{defin}[theo]{Definition}
\newtheorem{prop}[theo]{Proposition}
\def\id{{\mathbb I}}
\newcommand{{\lev}}{L}
\begin{document}

\title{Entanglement marginal problems}

\author{Miguel Navascu\'es$^1$, Flavio Baccari$^2$ and Antonio Ac\'in$^{3,4}$}
\affiliation{$^1$Institute for Quantum Optics and Quantum Information (IQOQI) Vienna, Austrian Academy of Sciences\\
$^2$Max-Planck-Institut f\"ur Quantenoptik, Hans-Kopfermann-Straße 1, 85748 Garching, Germany\\
$^3$ICFO-Institut de Ciencies Fotoniques, The Barcelona Institute of Science and Technology, 08860 Castelldefels (Barcelona), Spain\\
$^4$ICREA-Institucio Catalana de Recerca i Estudis Avan\c cats, Lluis Companys 23, 08010 Barcelona, Spain}

\begin{abstract}
We consider the entanglement marginal problem, which consists of deciding whether a number of reduced density matrices are compatible with an overall separable quantum state. To tackle this problem, we propose hierarchies of semidefinite programming relaxations of the set of quantum state marginals admitting a fully separable extension. We connect the completeness of each hierarchy to the resolution of an analog classical marginal problem and thus identify relevant experimental situations where the hierarchies are complete. For finitely many parties on a star configuration or a chain, we find that we can achieve an arbitrarily good approximation to the set of nearest-neighbour marginals of separable states with a time (space) complexity polynomial (linear) on the system size. Our results even extend to infinite systems, such as translation-invariant systems in 1D, as well as higher spatial dimensions with extra symmetries.
\end{abstract}

\maketitle

\section{Introduction}
Recent quantum experiments with cold atoms and superconducting qubits have demonstrated different degrees of control over systems of size ranging from about $50$ \cite{supremacy} to several hundred qubits \cite{schmied}. Determining whether entanglement is present in such architectures is a challenging task. To begin with, a full tomographic reconstruction of the underlying quantum state is out of the question, since it requires the estimation of a number of parameters that grows exponentially with the system size. A more realistic goal consists in estimating a number of parameters that scales polynomially with the system size, for instance the reduced density matrices of neighboring subsystems. 

In that case, however, most general methods for entanglement detection fail, since they are tailor-made for scenarios where the whole state is available and therefore require an optimization over an, again, exponential number of parameters. For instance, the application of the Doherty-Parrilo-Spedalieri (DPS) hierarchy of relaxations \cite{DPS1} to characterize the set of separable states, or some entanglement detection methods based on incomplete information~\cite{GRW} would require solving an optimization problem where one of the variables corresponds to the full density matrix of the system. The memory cost of such approaches becomes prohibitive already beyond the case of order of ten qubits. On the other hand, there exist entanglement detection methods based on two-point correlation functions~\cite{spinSq1,spinSq2}, but no general construction is known for them and they are not guaranteed to detect all entangled states. Leaving aside entanglement detection through indirect methods such as Bell inequality violations \cite{tura2017energy, Aloy_2019}, to our knowledge there is no systematic and scalable method to date to detect entanglement from incomplete information.

In this work, we address this question and consider the \emph{entanglement marginal problem}: deciding if an ensemble of density matrices could be the marginal states of a separable global state $\rho$. We present a framework to construct hierarchies of tests that are satisfied whenever the ensemble of reduced states is compatible with a global separable state. The tests amount to solving a semidefinite program (SDP) \cite{sdp}, a class of convex optimization problems that can be solved efficiently. If the ensemble fails one of the tests, then our method will output a linear entanglement witness in terms of the states in the ensemble that certifies the entanglement of $\rho$. We show that the convergence of the hierarchy is strongly connected to the classical marginal problem. For those cases in which the hierarchy is complete, we prove that any ensemble passing the ${\lev}^{th}$ test of the hierarchy is $O(1/{\lev}^2)$-close in trace norm to the set of ensembles admitting a separable extension. 

To our surprise, we find that, in many cases of physical interest, the corresponding hierarchies have a time complexity polynomial on the system size, and a memory complexity linear on the system size. This allows us to certify entanglement in systems of hundreds of sites. Our hierarchies also apply to characterize entanglement in infinite systems, as long as those are subject to some natural symmetries. More precisely, we propose two SDP hierarchies that, respectively, characterize entanglement in 1D translation-invariant (TI) systems and in 2D TI systems in the square lattice that are also symmetric under horizontal (or vertical) reflections. 
We test the practical performance of our approach by working out the critical temperature beyond which no entanglement is present in the system for a class of solvable models. We also compute the so-called \emph{separable energy per site}, the minimum energy per site achievable by separable states, for some $k$-local Hamiltonians.

This article is organized as follows: in Section \ref{notation} we introduce the notation and concepts to be used along the text. In Section \ref{entanglementmarg} we define the entanglement marginal problem, the subject of this work, and we relate it to the classical marginal problem. In section \ref{hierSec}, we present a hierarchy of SDPs to attack the entanglement marginal problem. We fully characterize, in Proposition \ref{conv_prop}, the set of state ensembles to which this hierarchy converges to. As we argue, this set corresponds to the set of marginals of an overall separable state iff the classical marginal problem is trivial in the considered marginal scenario. Otherwise, it is just a strict relaxation. We then apply our construction to solve the entanglement marginal problem in scenarios with finitely many parties (section \ref{finiteSec}) or symmetric systems with infinitely many parties (section \ref{infiniteSec}). Finally, in section \ref{conclusion} we present our conclusions.

\section{Preliminaries}
\label{notation}
Let ${\A}= (\alpha,\beta,...)$ be a finite or an infinite countable alphabet. In the following, each of the letters $\alpha\in\A$ will represent either a quantum system with Hilbert space dimension $d_\alpha$ or a classical random variable with finite or infinite cardinality. For finite systems of $n$ particles, one can take ${\A}$ to be the set $\{1,\ldots,n\}$, but our considerations also apply to infinite systems, which motivates the previous notation. Let $I\subset \A$ be a subset of all such systems. For any vector or sequence $\phi\equiv (\phi_\alpha)_{\alpha\in\A}$, we call $\phi_I$ the vector or sequence with coefficients $(\phi_\alpha: \alpha\in I)$. Similarly, for any quantum state $\rho:\H\to\H$, with $\H=\bigotimes_{\alpha\in\A} \H_\alpha$ being a composite Hilbert space, the operator $\rho_I$ denotes the partial trace $\tr_{\A\setminus I}(\rho)$. This notation extends to probability distributions: $p_I(\phi_I)$ denotes the marginal distribution of $p(\phi)$ for variables $(\phi)_{\alpha\in I}$. The last two sentences do not make much sense mathematically when $|\A|=\infty$; in that case, we refer the reader to Appendix \ref{infinite_systems_app} for a definition of quantum states, probability distributions and their respective marginalizations.

Let $\Ind\subset P(\A)$, where $P(\A)$ denotes the power set of $\A$, that is, the set consisting of all its subsets. By conducting several tomographic experiments on all the systems in $I\in \Ind$, we can estimate $\rho_I$, the density matrix describing the sites at $I$. A necessary condition for the existence of an overall quantum state $\rho$ for all systems in $\A$ compatible with the ensemble $\{\rho_I\}_{I\in\Ind}$ is that of \emph{local compatibility}, i.e.,

\be
\tr_{I\setminus J}(\rho_{I})=\tr_{J\setminus I}(\rho_{J}),
\ee
\noindent for $I,J\in \Ind$.

Local compatibility can also be defined for ensembles of probability distributions. If $\{p_{I}(\phi_I)d\phi_I:I\in\Ind\}$ are the marginals of a global measure $p(\phi)$, defined over all sites in $\A$, then $\{p_{I}(\phi_I)d\phi_I:I\in\Ind\}$ must satisfy the local compatibility constraints

\be\label{eq:localcompat}
\int d\phi_{I\setminus J} p_{I}(\phi_I)=\int d\phi_{J\setminus I}p_{J}(\phi_J), \forall I,J\in\Ind\,
\ee
\noindent for $I,J\in \Ind$.

In section \ref{infiniteSec} we study configurations where the physical systems lie on the sites of an infinite chain or an infinite hypercubic lattice and the overall state describing $\A$ satisfies translation invariance. In those scenarios, local compatibility admits a very simple form. Consider, for instance, an infinite translation-invariant 1D system, with $\A=\Z$ and $d_\alpha=d$ for all $\alpha\in\A$. Let $\omega_{\{1,...,k\}}$ denote the density matrix (probability distribution) describing the quantum systems (classical random variables) at sites $1,...,k$. If $\omega_{\{1,...,k\}}$ is the marginal of a translation-invariant 1D system $\omega$, then $\omega_{\{1,...,k\}}$ must also be subject to the local compatibility relations

\be
\omega_{\{1,...,k-1\}}=\omega_{\{2,...,k\}}.
\ee
\noindent This condition is known as local translation invariance (LTI) \cite{TI}.

In higher spatial dimensions, translation invariance enforces slightly more complicated local constraints. Think of a 2D infinite square lattice, i.e., $\A=\Z^2$. A $k\times l$ plaquette is a set of sites of the form $\hat{I}+z$, for $\hat{I}=\{(x,y): x=1,...,k; y=1,...,l\}$, $z\in\Z^2$. If the quantum or classical description $\omega_{\hat{I}}$ of the sites in $\hat{I}$ is the marginal of a TI quantum or classical system, then local compatibility implies that 

\begin{align}
&\omega_{\hat{I}\setminus\{(1,y):y\}}=\omega_{\hat{I}\setminus\{(k,y):y\}},\nonumber\\
&\omega_{\hat{I}\setminus\{(x,1):x\}}=\omega_{\hat{I}\setminus\{(x,l):x\}},
\end{align}
\noindent see Figure \ref{LTI_pic}. These conditions are also dubbed LTI~\cite{TI}.

\begin{figure}
  \centering
  \includegraphics[width=8cm]{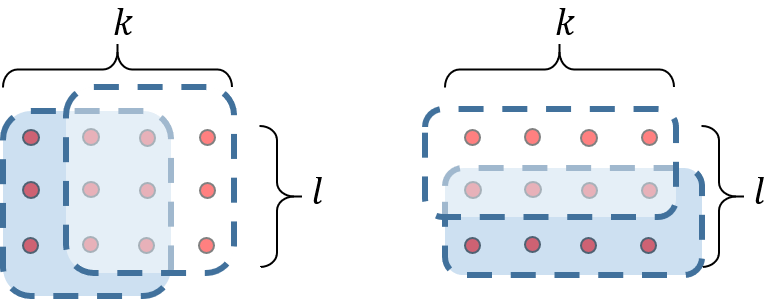}
  \caption{\textbf{A $k\times l$ plaquette.} Local translation invariance implies that the marginals (quantum or classical) corresponding to the sites enclosed by the two dashed lines in each figure are identical.}
  \label{LTI_pic}
\end{figure}

\section{The entanglement marginal problem}
\label{entanglementmarg}

A multi-partite quantum state $\sigma$, with subsystems labeled by the elements of ${\cal A}$, is said to be separable if there exists a multi-variate measure $p(\phi)d\phi$, with $\phi =(\phi_\alpha)_{\alpha\in\A}$, such that

\be
\sigma = \int p(\phi)d\phi \bigotimes_{\alpha\in\A}\proj{\phi_\alpha}
\label{separable}
\ee
\noindent (if $|\A|=\infty$, we need to marginalize $p$ over finite subsets of $\A$, see Appendix \ref{infinite_systems_app}). In the following, when we refer to a measure $p(\phi)$ generating a separable state $\sigma$, we implicitly mean that the state is defined by Eq.~\eqref{separable}.

In this work, we wish to solve what from now on we call the \emph{entanglement marginal problem}, stated as follows. 

\begin{defin}
\noindent\textbf{Entanglement marginal problem.}

Input: $\A,\Ind\subset P(\A)$, an ensemble $\{\rho_I\}_{I\in\Ind}$ of locally compatible states. 

Goal: find whether there exists a separable state $\sigma$ such that $\tr_{\A\setminus I}(\sigma) =\rho_I$ for all $I\in\Ind$.  
\end{defin}

As we will see, the entanglement marginal problem is strongly connected to the \emph{classical marginal problem}, a very old question that appears in numerous contexts in physics, such as statistical physics \cite{TI, undecidable} or quantum nonlocality \cite{oneD}. 

\begin{defin}
\noindent\textbf{Classical marginal problem.}

Input: $\A,\Ind\subset P(\A)$, an ensemble $\{p_I(\phi_I)\}_{I\in\Ind}$ of locally compatible probability distributions. 

Goal: find whether there exists a global measure $\tilde{p}$ such that $\tilde{p}_I(\phi_I) =p_I(\phi_I)$ for all $I\in\Ind$.
\end{defin}

In general marginal scenarios $\A,\Ind\in P(\A)$, local compatibility of classical distributions does not always imply the existence of a global measure. Consider, for instance, $\A=\{1,2,3\}$, and the distributions of random variables $a_1,a_2,a_3\in\{0,1\}$ given by $P_I(0,1)=P_I(1,0)=\frac{1}{2}$, for $I = \{1,2\}, \{1,3\},\{2,3\}$. These distributions satisfy local compatibility, and yet it is easy to see that no global distribution $P(a_1,a_2,a_3)$ admits them as marginals. The classical marginal problem is thus non-trivial, and, in fact, it can be shown to be NP-hard \cite{pitowsky1989quantum}.

In order to understand the connection between the entanglement marginal problem and its classical counterpart, let us rewrite the reduced states $\rho_I$ as
\be
\rho_{I}\equiv\int p_{I}(\phi_I)d\phi_I\bigotimes_{\alpha\in I}\proj{\phi_\alpha}.
\label{separablemarg}
\ee
for some given probability measures $p_{I}(\phi_I)$. Notice that one can assume such a decomposition always to exist, since otherwise some of the reduced states would be entangled, making the solution to the entanglement marginal problem trivial. Solving the entanglement marginal problem is equivalent to determining whether it is possible to find a collection of distributions $\lbrace p_{I}(\phi_I) \rbrace_{I\in\Ind}$ that admits a global measure and such that eq. (\ref{separablemarg}) holds. Indeed, if such a measure $q(\phi)$ exists, it could be used to generate a separable state $\sigma$ whose reduced states on each set $I\in\Ind$ coincides with $\rho_{I}$, simply because the corresponding marginal distribution of $q(\phi)$ coincides with $p_{I}(\phi_I)$.

The connection between the classical and the entanglement marginal problem does not make either of them less challenging. First of all, because the choice of probability measures $p_{I}(\phi_I)$ in \eqref{separablemarg} is generally not unique and there is no guarantee for those measures even to be locally compatible, in the sense of \eqref{eq:localcompat}. Moreover, even if we could enforce their local compatibility, this is generally not enough to guarantee the existence of a global measure, and hence of an overall separable extension for the states. We come back to this point below. However, and despite the complexity of the problems involved, relating the entanglement marginal problem and its classical counterpart offers a useful interpretation to the hierarchies that we introduce in the next Section, which will prove crucial to analyze their convergence properties. 



\section{Semidefinite programming relaxations of the entanglement marginal problem}
\label{hierSec}

To solve the entanglement marginal problem, we follow the ideas of Doherty Parrilo and Spedalieri (DPS) ~\cite{DPS1,DPS2,DPS3}, who provided an algorithm to solve the entanglement problem in the case where the whole density matrix $\rho$ is specified, i.e., when ${\cal I}=\{\A\}$. 
Suppose then that there exists a probability distribution $p(\phi)$ over pure states $\phi$ such that the separable state defined by Eq.~(\ref{separable}) admits the states $\{\rho_I\}$ as marginals. Next, define the states

\be
\rho^{({\lev})}_{I}\equiv\int p_{I}(\phi_I)d\phi_I\bigotimes_{\alpha\in I}\proj{\phi_\alpha}^{\otimes {\lev}}.
\label{truqui}
\ee

\noindent State $\rho^{({\lev})}_{I}$ thus acts on a composite Hilbert space whose factors are the ${\lev}^{th}$ tensor powers of the Hilbert space of each system $\alpha\in I$. Notice that each of them can be interpreted as a symmetric extension of $\rho_{I}$ derived from the decomposition \eqref{separablemarg}. Denoting by $I^k$ the first $k$ copies of each system in $I\in\Ind$, it is easy to see that the states $\{\rho^{({\lev})}_{I}\}_{I\in\Ind}$ satisfy the constraints:

\begin{enumerate}[label=(\roman*)]
\item
$\tr_{I^{{\lev}-1}}(\rho^{({\lev})}_{I})=\rho_{I}$.

\item
$\rho^{({\lev})}_{I}$ is Positive under Partial Transposition (PPT) \cite{peres} across all bipartitions of its $|I|{\lev}$ systems.
\item
$\rho^{({\lev})}_{I}\in B(\bigotimes_{\alpha\in I}\H_\text{sym}({\lev},d_\alpha))$, where $\H_\text{sym}({\lev},d)$ denotes the symmetric space of ${\lev}$ $d$-dimensional particles.
\item
$\tr_{(I\setminus J)^{\lev}}(\rho^{({\lev})}_{I})=\tr_{(J\setminus I)^{\lev}}(\rho^{({\lev})}_{J})$ (after appropriately reordering the systems of one of the sides).
\end{enumerate}

For each ${\lev}$, we relax the property that $\{\rho_{I}\}_{I}$ arise from an overall separable state by demanding that there exist positive semidefinite matrices $\{\rho_I^{({\lev})}\}_I$ satisfying the four conditions above (note that the last condition is missing in the DPS construction \cite{DPS3}). For finite $\A$, the existence of the matrices $\{\rho_I^{({\lev})}\}_I$ can, in turn, be cast in terms of a semidefinite program (SDP)~\cite{sdp}. As we increase ${\lev}$, these conditions lead to the announced SDP \emph{hierarchy} $\hier (\Ind|\{\rho_I\}_{I\in\Ind})$ ($\hier$, for short) of tests for entanglement detection. 

\begin{defin}
\noindent\textbf{SDP hierarchy $\hier$}

Given a marginal scenario $\A,\Ind\in P(\A)$ and $\lev\in\mathbb{N}$, $\hier^\lev$ is the set of state ensembles $\{\rho_I\}_{I\in\Ind}$ for which there exist $\{\rho^{({\lev})}_{I}\}_{I\in\Ind}$ such that, for all $I,J\in\Ind$,

\begin{align}
&\rho^{({\lev})}_{I}\in B\left(\bigotimes_{\alpha\in I}\H_\text{sym}({\lev},d_\alpha)\right),\nonumber\\
&(\rho^{({\lev})}_{I})^{T_S} \, \succeq \, 0, \forall S\subset I^L, \nonumber\\
&\tr_{I^{{\lev}-1}}(\rho^{({\lev})}_{I})=\rho_{I}, \nonumber\\ 
&\tr_{(I\setminus J)^{\lev}}(\rho^{({\lev})}_{I})=\tr_{(J\setminus I)^{\lev}}(\rho^{({\lev})}_{J}).
\end{align}
\noindent If $\{\rho_I\}_{I\in\Ind}\in\hier^{\lev}$, we say that the ensemble $\{\rho_I\}_{I\in\Ind}$ passes the $L^{th}$ test of hierarchy $\hier$. Otherwise, $\{\rho_I\}_{I\in\Ind}$ is said to fail the $L^{th}$ test.
\end{defin}
Note that the ensemble variables $\{\rho^{({\lev})}_I\}_{I\in\Ind}$ and their partial transposes can be embedded, through direct sum, into an overall positive semidefinite variable, whose diagonal blocks are subject to the linear/affine constraints above. That way, for each $\lev$, one arrives at the canonical form of a semidefinite program \cite{sdp}.

Each level of the hierarchy $\hier$ involves matrices acting on $\bigotimes_{\alpha\in I}\H_\text{sym}({\lev},d_\alpha)$, see condition (iii). Since the dimension of $\H_\text{sym}({\lev},d)$ is
\begin{equation}
\label{dsym}
d_\text{sym}=\left(\begin{array}{c} {\lev}+d-1\\d-1\end{array}\right)\sim O({\lev}^{d-1})\,,
\end{equation}
the size of the matrices $\rho^{({\lev})}_{I}$ scales as $O({\lev}^{\sum_{\alpha\in I}(d_\alpha-1)})$. Clearly, if the ensemble $\{\rho_I\}_{I\in {\cal I}}$ fails the ${\lev}^\text{th}$ SDP test of hierarchy $\hier$, then it cannot admit a separable extension. In such a predicament, the dual of the SDP \cite{sdp} will return a certificate of infeasibility, i.e., an entanglement witness in the form of a linear inequality $\sum_I\tr(W_I\rho_I)\geq 0$ that is violated by the ensemble $\{\rho_I\}_I$.

In Appendix \ref{simplified}, we present also an alternative simplified hierarchy of SDPs $\bar{\hier}(\Ind)$ for those cases in which $I\in\Ind$ contains an index $\alpha_I$ that does not appear in any other set $J\in\Ind$, $J\not=I$. One can then provide a hierarchy where the system $\alpha_I$ is not ${\lev}$-times ``extended'' in the construction of the matrix variable $\rho^{({\lev})}_I$. Trivially, any ensemble $\{\rho_I\}_I$ passing the ${\lev}^\text{th}$ test of $\hier(\Ind)$ will also pass the ${\lev}^\text{th}$ test of $\bar{\hier}(\Ind)$. For some scenarios $\Ind$, though, both hierarchies provide a similar approximation to the set of separable marginals and implementing $\bar{\hier}^{\lev}$ requires considerably fewer computational resources.

The natural question to ask is whether the SDP hierarchies defined here are complete, i.e., under which conditions $\{\rho_{I}\}_{I}\in \hier^{\lev}$ for all ${\lev}$ implies that $\{\rho_{I}\}_{I}$ admits a separable extension. To answer it, we exploit the connection to the classical marginal problem thanks to the following proposition, proven in Appendix~\ref{proofProp} and based on the linear maps introduced in~\cite{NOP}.

\begin{prop}
\label{conv_prop}
Let $\{\rho_I\}_I\in {\hier}^{\lev}$ or $\{\rho_I\}_I\in \bar{\hier}^{\lev}$. Then, there exists an ensemble of fully separable states $\{\tilde{\rho}_I\}_I$, such that, for all $I\in\Ind$,

\begin{align}
\|\rho_I-\tilde{\rho}_I\|_1\leq 2\left\{1-\prod_{\alpha\in I,\alpha\not=\alpha_I}(1-\epsilon(\lev,d_\alpha))\right\},
\label{poly_approx}
\end{align}
\noindent where

\begin{align}
\epsilon(\lev,d)=&\frac{d}{2(d-1)}\times\nonumber\\
&\min\{1-x:P^{d-2, \lev \pmod   2}_{\lfloor\lev/2 +1\rfloor} (x)=0\},
\label{epsilon_def}
\end{align}
\noindent and $\{P^{\beta,\gamma}_m(x)\}_{\beta,\gamma,m}$ denote the Jacobi polynomials \cite{stegun}.

Moreover, the separable states $\{\tilde{\rho}_I\}_I$ are generated by an ensemble $\{\tilde{p}_I(\phi_I)\}_I$ of locally compatible distributions.
\end{prop}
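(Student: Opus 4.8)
The plan is to construct the separable marginals $\{\tilde\rho_I\}_I$ by applying, to each extended subsystem of the symmetric extensions $\{\rho^{(\lev)}_I\}_I$, the state-estimation (``measure-and-prepare'') maps of~\cite{NOP}, and to control the resulting trace-norm error through the PPT conditions of the hierarchy. Since passing the $\lev$-th test of $\hier$ implies passing that of $\bar{\hier}$, I would work in the $\bar{\hier}$ setting, in which the distinguished site $\alpha_I$ is left unextended; this is exactly why $\alpha_I$ is excluded from the product in \eqref{poly_approx}. Concretely, for each extended site $\alpha\in I$ with $\alpha\neq\alpha_I$, I would fix the covariant POVM $\{d_{\text{sym}}(\lev,d_\alpha)\,\proj{\psi_\alpha}^{\otimes\lev}\,d\psi_\alpha\}$ that resolves the identity on $\H_{\text{sym}}(\lev,d_\alpha)$, measure the corresponding factor of $\rho^{(\lev)}_I$, and prepare the single-copy state $\proj{\psi_\alpha}$, while keeping the factor $\alpha_I$ untouched. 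Carried out jointly over all $\alpha\neq\alpha_I$, this yields an outcome distribution $\tilde p_I(\phi_I)$ and a state $\tilde\rho_I$ that is manifestly fully separable, being a convex mixture of products of pure states on the extended sites with an arbitrary state on the single factor $\alpha_I$; the second claim of the proposition will then be immediate once local compatibility of the $\tilde p_I$ is established.

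The technical core is a single-site estimate: for one system of local dimension $d$ carrying a PPT symmetric $\lev$-extension $\rho^{(\lev)}$ with one-copy marginal $\rho=\tr_{I^{\lev-1}}(\rho^{(\lev)})$, the measure-and-prepare output $\tilde\rho$ must obey $\|\rho-\tilde\rho\|_1\leq 2\epsilon(\lev,d)$. Both $\rho$ and $\tilde\rho$ are linear images of $\rho^{(\lev)}$, so $\rho-\tilde\rho=\Xi(\rho^{(\lev)})$ for an explicit Hermiticity-preserving map $\Xi$, and by trace-norm duality it suffices to bound $\tr[\Xi^\dagger(W)\rho^{(\lev)}]$ over $-\id\leq W\leq\id$. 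Here I would exploit $U(d)$-covariance: decomposing $\H\otimes\H_{\text{sym}}(\lev,d)=\H_{\text{sym}}(\lev+1,d)\oplus(\text{mixed symmetry})$ via Clebsch--Gordan, both maps act as scalars on each isotypic block, and the relevant overlaps $\int|\braket{\psi}{\phi}|^{2\lev}d\psi$ reduce to evaluations of the Jacobi polynomials $P^{d-2,\lev\bmod 2}_m$. The crucial point is that while the unconstrained norm $\|\Xi\|_{1\to1}$ only yields the $O(d/\lev)$ de Finetti rate, imposing that $\rho^{(\lev)}$ be PPT across the $(1\,|\,\lev-1)$ cut bounds the extremal eigenvalue of $\Xi^\dagger(W)$ by the largest root of the Jacobi polynomial above; since that root sits at $1-O(1/\lev^2)$, one recovers precisely $\epsilon(\lev,d)=\frac{d}{2(d-1)}\min\{1-x:P^{d-2,\lev\bmod 2}_{\lfloor\lev/2+1\rfloor}(x)=0\}$ together with the advertised $O(1/\lev^2)$ scaling. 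This PPT-to-Jacobi reduction---for which the maps of~\cite{NOP} are tailored---is where I expect the real work to lie.

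With the single-site lemma in hand, I would assemble \eqref{poly_approx} by a hybrid argument, replacing the partial trace by the estimation map one extended site at a time; because the estimation maps act independently on distinct tensor factors, the successive single-site fidelities multiply and the accumulated error telescopes into $2\{1-\prod_{\alpha\in I,\,\alpha\neq\alpha_I}(1-\epsilon(\lev,d_\alpha))\}$, with the unextended site $\alpha_I$ contributing no error. Finally, local compatibility of $\{\tilde p_I\}_I$ follows from condition (iv): marginalizing $\tilde p_I$ over $\phi_{I\setminus J}$ amounts to discarding the POVMs on the sites of $I\setminus J$, and since these integrate to the identity on the respective symmetric subspaces, the result coincides with the distribution obtained by applying the remaining estimation maps to $\tr_{(I\setminus J)^\lev}(\rho^{(\lev)}_I)$. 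The hierarchy constraint $\tr_{(I\setminus J)^\lev}(\rho^{(\lev)}_I)=\tr_{(J\setminus I)^\lev}(\rho^{(\lev)}_J)$ then identifies this with the corresponding marginal of $\tilde p_J$, establishing \eqref{eq:localcompat} and completing the proof.
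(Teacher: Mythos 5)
Your overall architecture --- reduce to $\bar{\hier}$ so that the unextended site $\alpha_I$ contributes no error, apply a local map to each extended factor of $\rho^{(\lev)}_I$, multiply the single-site depolarizing factors, and deduce local compatibility of the $\tilde p_I$ from condition (iv) because marginalizing over $\phi_\alpha$ amounts to tracing out the $\lev$ copies of $\alpha$ --- is exactly the paper's. The gap is in your single-site lemma. The map you commit to, ``measure the covariant POVM $\{d_\text{sym}\proj{\psi}^{\otimes\lev}d\psi\}$ and prepare $\proj{\psi}$'', has an intrinsic bias of order $d/\lev$ that no PPT constraint can remove, because the bias is already present on \emph{exactly separable} extensions. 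Indeed, Schur's lemma gives
\be
\int d\psi\, d_\text{sym}\,|\braket{\psi}{\varphi}|^{2\lev}\,\proj{\psi}=\frac{\id+\lev\proj{\varphi}}{\lev+d},
\ee
so for the extension $\rho^{(\lev)}=\proj{\varphi}^{\otimes\lev}$ --- Bose-symmetric, PPT across every cut, and fully separable --- your output is $\tilde\rho=(\id+\lev\proj{\varphi})/(\lev+d)$, whence $\|\rho-\tilde\rho\|_1=\frac{2(d-1)}{\lev+d}=\Theta(d/\lev)$, which for large $\lev$ grossly exceeds the claimed $2\epsilon(\lev,d)=O(d^2/\lev^2)$. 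Hence the assertion that PPT across the $(1|\lev-1)$ cut pushes the error of this channel down to the Jacobi-root rate is false: for the covariant measure-and-prepare channel the $O(d/\lev)$ de Finetti rate is already saturated inside the separable cone, and the bound \eqref{poly_approx} cannot be reached along this route.

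The paper's proof uses the maps $\Omega^{d,\lev}_\phi$ of \cite{NOP}, and the division of labour is the opposite of the one you propose. Those maps are \emph{not} of measure-and-prepare form (they are not completely positive); their average $\Omega^{d,\lev}=\int d\phi\,\Omega^{d,\lev}_\phi$ satisfies, as an algebraic identity with no positivity input, that its action on any extension returns $(1-\epsilon(\lev,d))$ times the one-copy marginal plus $\epsilon(\lev,d)$ times the maximally mixed state --- this is where the Jacobi-root quantity $\epsilon(\lev,d)$ enters, as a built-in property of the map. The PPT conditions of the hierarchy are then used precisely where your construction needed no help: they guarantee that each output $(\Omega^{d,\lev}_\phi\otimes\id)(\rho^{(\lev)}_I)$ is positive semidefinite, hence that $\tilde p_I(\phi_I)=\tr\bigl(\bigotimes_{\alpha}\Omega^{d_\alpha,\lev}_{\phi_\alpha}(\rho^{(\lev)}_I)\bigr)$ is a genuine nonnegative measure and $\tilde\rho_I$ is genuinely separable. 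With these maps, the product of depolarizing factors immediately yields $\tilde\rho_I=\prod_{\alpha\in I,\alpha\neq\alpha_I}(1-\epsilon(\lev,d_\alpha))\,\rho_I+\bigl(1-\prod_{\alpha\in I,\alpha\neq\alpha_I}(1-\epsilon(\lev,d_\alpha))\bigr)\sigma_I$ and thus \eqref{poly_approx}, and your local-compatibility argument goes through essentially verbatim, since tracing the output over site $\alpha$ returns $\tr_{\alpha^\lev}(\rho^{(\lev)}_I)$ and condition (iv) does the rest. So the fix is not a sharper analysis of your channel but its replacement by the \cite{NOP} maps, with PPT repurposed from error control to positivity.
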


By the properties of the zeros of the Jacobi polynomials \cite{stegun}, it follows that, for $\lev\gg d_\alpha$, the right-hand side of (\ref{poly_approx}) tends to $O\left(\frac{\sum_{\alpha\in I,\alpha\not=\alpha_I}d_\alpha^2}{\lev^2}\right)$. The above proposition thus shows that the introduced hierarchies are a converging series of relaxations of the set of reduced states $\{\rho_I\}_I$ admitting a decomposition \eqref{separablemarg} generated by locally compatible probability measures. 

However, since these compatibility conditions do not necessarily imply a solution to the classical marginal problem, the proposition is not a proof of convergence, as the states $\{\tilde{\rho}_I\}_I$ are not necessarily the marginals of an overall separable state $\sigma$ as in eq. (\ref{separable}). In fact, suppose that there exists an ensemble $\{\tilde{p}_I(\phi_I)\}_I$ of locally compatible distributions which are \emph{not} generated by an underlying global measure. Then, for $m$ sufficiently high, one can find a similar counterexample with an ensemble of probability distributions $\{\tilde{P}_I(a_I)\}$ of random variables $\{a_\alpha\}_{\alpha\in\A}$ taking values in $\{1,...,m\}$. In turn, such distributions can be used to build an ensemble $\rho^{\text{noext}}_I$ of classical states of the form $\rho^{\text{noext}}_I=\sum_{a_I}\tilde{P}(a_I)\proj{a_I}$, where $\{\ket{a_I}: a_I\in \{1,...,m\}^{|I|}\}$ is the computational basis of $(\C^{m})^{\otimes |I|}$. This ensemble is generated by locally compatible probability measures and yet there is no overall quantum state $\sigma$ (separable or not) admitting $\{\rho_I\}_I$ as reduced density matrices. For these cases, the hierarchy $\hier$ is incomplete.

Conversely, Proposition $1$ guarantees convergence in all those classical marginal scenarios $\Ind$ where global compatibility follows from local compatibility. In Section \ref{finiteSec} and \ref{infiniteSec} we present several such scenarios for systems of finite and infinite size, and analyse the corresponding entanglement marginal problems. These examples illustrate the applicability of hierarchies $\hier,\bar{\hier}$ to detect entanglement in relevant experimental situations.

\section{Finite systems}
\label{finiteSec}

In this section we present marginal scenarios for finite systems, $\Ind\subset P(\A)$, with $|\A|<\infty$, where the hierarchies $\hier,\bar{\hier}$ completely solve the entanglement marginal problem. 

First, we need to introduce some graph notation. A graph ${\cal G}= (V,E)$ is defined by a collection of indices $V$, called vertices, together with a set $E$ of pairs of vertices $\{v_i,v_j\}$, called edges. Two vertices $v_i, v_j$ are said to be connected if there exists an edge $e\in E$ such that $v_i,v_j\in e$. A \emph{clique} is a set of pair-wise connected vertices, and a \emph{maximal clique} $C$ is a clique such that, for any $v\in V, v\not\in C$, $C\cup \{v\}$ is not a clique. A \emph{graph cycle} is a sequence of vertices $(v_i)_{i=1}^m$ in $V$ such that $v_1=v_m$ and $v_i,v_{i+1}$ are connected for $i=1,...,m-1$. A \emph{chord} in a graph cycle is an edge that does not form part of the cycle but nonetheless connects two vertices of it, and a \emph{chordal graph} is a graph such that any cycle of length $4$ or higher has a chord. Chordal graphs find several applications in semidefinite programming and combinatorial optimization because of their many appealing properties.  It is well known that it is very efficient to find both a perfect elimination ordering for the vertices of a chordal graph and an arrangement of its maximal cliques so to fulfil the so-called \emph{running intersection property}.  While these properties are related (see for instance \cite{vandenberghe2015chordal}),  the second has a clearer connection to the entanglement marginal problem,  as we will show in the following.

Given $\A,\Ind$, define a graph ${\cal G}$ with vertices given by the letters of $\A$, and join with an edge every pair of vertices $\alpha,\beta$ such that $\alpha,\beta\in I$ for some $I\in {\cal I}$. In the following, we will refer to ${\cal G}$ as the \emph{dependency graph} of $\Ind$. The following proposition provides a sufficient condition for a marginal scenario $\A,\Ind\in P(\A)$ to have a trivial classical marginal problem.

\begin{prop}
\label{trivial_classical}
Let $\A,\Ind\in P(\A)$ be a marginal scenario, and let ${\cal G}$ be its corresponding dependency graph. If ${\cal G}$ is chordal and the elements of $\Ind$ are its maximal cliques, then any ensemble of locally compatible probability distributions $\{\tilde{p}_I\}_{I\in\Ind}$ admits a global measure.
\end{prop}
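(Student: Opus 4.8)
The plan is to reduce the statement to the classical junction-tree factorization for chordal graphs and then to build the global measure explicitly by gluing the given marginals together one clique at a time. First I would invoke the standard structural fact, already alluded to before the statement, that a chordal graph admits an ordering $C_1,\dots,C_r$ of its maximal cliques satisfying the running intersection property: for every $j\geq 2$ there is an index $\sigma(j)<j$ with $S_j\equiv C_j\cap(C_1\cup\cdots\cup C_{j-1})\subseteq C_{\sigma(j)}$. Since by hypothesis $\Ind$ is exactly the set of maximal cliques of $\G$, this orders the elements of $\Ind$, and the separators $S_j$ are precisely the overlaps governed by the local compatibility constraints.

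Next I would construct, by induction on $j$, a probability measure $q_j$ on the variables $\phi_{V_j}$, where $V_j=C_1\cup\cdots\cup C_j$, obeying the invariant that $q_j$ marginalizes to $\tilde p_{C_k}$ on every clique $C_k$ with $k\leq j$. The base case is $q_1=\tilde p_{C_1}$. For the inductive step I would set
\be
q_j(\phi_{V_j})=q_{j-1}(\phi_{V_{j-1}})\,\tilde p_{C_j}(\phi_{C_j\setminus S_j}\mid \phi_{S_j}),
\ee
i.e. extend $q_{j-1}$ by the conditional distribution of the fresh variables $\phi_{C_j\setminus S_j}$ given the separator variables $\phi_{S_j}$, read off from $\tilde p_{C_j}$. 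Nonnegativity and normalization are immediate, since integrating out $\phi_{C_j\setminus S_j}$ returns $q_{j-1}$.

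The heart of the argument is maintaining the marginal invariant, and this is exactly where both the running intersection property and local compatibility enter. To recover $q_j|_{C_j}=\tilde p_{C_j}$ I need the marginal of $q_{j-1}$ on the separator $S_j$ to coincide with that of $\tilde p_{C_j}$. By the running intersection property $S_j\subseteq C_{\sigma(j)}$ with $\sigma(j)<j$, so the inductive invariant gives $q_{j-1}|_{C_{\sigma(j)}}=\tilde p_{C_{\sigma(j)}}$, whence $q_{j-1}|_{S_j}=\tilde p_{C_{\sigma(j)}}|_{S_j}$. On the other hand, local compatibility \eqref{eq:localcompat} applied to the pair $C_{\sigma(j)},C_j$ yields $\tilde p_{C_{\sigma(j)}}|_{C_{\sigma(j)}\cap C_j}=\tilde p_{C_j}|_{C_{\sigma(j)}\cap C_j}$, and since $S_j\subseteq C_{\sigma(j)}\cap C_j$ this marginalizes down to $\tilde p_{C_{\sigma(j)}}|_{S_j}=\tilde p_{C_j}|_{S_j}$. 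Combining the two, the separator marginal of $q_{j-1}$ matches the one used to define the conditional, so integrating $q_j$ over $V_{j-1}\setminus S_j$ gives the common separator marginal $\tilde p_{S_j}$ times $\tilde p_{C_j}(\cdot\mid\phi_{S_j})$, which is $\tilde p_{C_j}$; for $k<j$ the invariant is inherited from $q_{j-1}$ because $C_k\subseteq V_{j-1}$. Taking $j=r$ produces a global measure $\tilde p\equiv q_r$ on $\phi_\A$ (with $V_r=\A$, as the maximal cliques cover every vertex of $\G$) whose marginals reproduce all the $\tilde p_I$, which is the claim.

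I expect the main obstacle to be measure-theoretic bookkeeping rather than combinatorics: because the $\phi_\alpha$ range over continuous state spaces, the conditional $\tilde p_{C_j}(\phi_{C_j\setminus S_j}\mid\phi_{S_j})$ must be defined as a genuine disintegration of $\tilde p_{C_j}$ along $\phi_{S_j}$, and one must verify that the gluing is insensitive to the $\tilde p_{S_j}$-null set on which the conditional is ambiguous. This is handled by working with regular conditional probabilities and observing that every marginalization above only integrates against $\tilde p_{S_j}$, so the null-set ambiguities never contribute; in the discrete case the whole construction collapses to the elementary identity $q_j=q_{j-1}\,\tilde p_{C_j}/\tilde p_{S_j}$ on the common support.
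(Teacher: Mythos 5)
Your proposal is correct and follows essentially the same route as the paper's Appendix proof: order the maximal cliques by the running intersection property and glue the marginals via the junction-tree product $q_j = q_{j-1}\,\tilde p_{C_j}/\tilde p_{S_j}$, using local compatibility on the separator $S_j\subseteq C_{\sigma(j)}$ to keep the clique marginals intact. The only difference is presentational (you build the product one clique at a time with an explicit invariant, while the paper merges two cliques and re-verifies the running intersection property before iterating), plus your welcome remark on disintegration in the continuous case, which the paper leaves implicit.
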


\noindent This proposition is a well-known fact in combinatorial optimization \cite{laserre} and relates to the running intersection property mentioned above. For completeness, we include a proof in Appendix \ref{runningApp}. Combining Propositions \ref{conv_prop} and \ref{trivial_classical}, we arrive at our first main result.

\begin{theo}
\label{charac_chordal}
Let $\A,\Ind\in P(\A)$ be a marginal scenario such that the dependency graph of $\Ind$ is chordal and each $I\in\Ind$ is one of its maximal cliques. Let $\{\rho_I\}_{I\in\Ind}\in \bar{\hier}^\lev$. Then, there exists a separable state $\sigma$ such that 

\be
\|\sigma_I-\rho_I\|_1\leq O\left(\frac{\sum_{\alpha\in I, \alpha\not=\alpha_I}d_\alpha^2}{\lev^2}\right),
\ee
\noindent for all $I\in\Ind$.
\end{theo}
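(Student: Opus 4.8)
The plan is to obtain the theorem as a direct composition of Propositions \ref{conv_prop} and \ref{trivial_classical}, with the chordality hypothesis serving precisely to close the gap that Proposition \ref{conv_prop} leaves open. First I would feed the ensemble $\{\rho_I\}_{I\in\Ind}\in\bar{\hier}^\lev$ into Proposition \ref{conv_prop}. This immediately produces an ensemble of fully separable states $\{\tilde{\rho}_I\}_{I\in\Ind}$, each generated by a measure $\tilde{p}_I(\phi_I)$ over local pure states, such that the whole family $\{\tilde{p}_I\}_{I\in\Ind}$ is locally compatible in the sense of \eqref{eq:localcompat}, and such that $\|\rho_I-\tilde{\rho}_I\|_1$ obeys the Jacobi-polynomial bound \eqref{poly_approx}. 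At this stage the $\tilde{\rho}_I$ are close to the target marginals but are only known to be \emph{individually} separable with mutually compatible generating measures; nothing yet guarantees they are the marginals of one global separable state.

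The crucial step is to supply that global state using the graph hypothesis. Since the dependency graph of $\Ind$ is chordal and the sets $I\in\Ind$ are exactly its maximal cliques, Proposition \ref{trivial_classical} applies to the locally compatible ensemble $\{\tilde{p}_I\}_{I\in\Ind}$ and yields a single global measure $\tilde{p}(\phi)$ over $\phi=(\phi_\alpha)_{\alpha\in\A}$ whose marginals reproduce each $\tilde{p}_I$. I would then define the separable state
\begin{equation}
\sigma=\int \tilde{p}(\phi)\,d\phi\,\bigotimes_{\alpha\in\A}\proj{\phi_\alpha},
\end{equation}
which has the form \eqref{separable} and is therefore separable by construction. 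Tracing out the complement of any $I\in\Ind$ and using that the marginal of $\tilde{p}$ on $\phi_I$ equals $\tilde{p}_I$, one obtains $\sigma_I=\int \tilde{p}_I(\phi_I)\,d\phi_I\bigotimes_{\alpha\in I}\proj{\phi_\alpha}=\tilde{\rho}_I$, so that the marginals of $\sigma$ coincide \emph{exactly} with the states delivered by Proposition \ref{conv_prop}.

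It then remains to combine the estimates: $\|\sigma_I-\rho_I\|_1=\|\tilde{\rho}_I-\rho_I\|_1$ is bounded by the right-hand side of \eqref{poly_approx}, which, as already noted below Proposition \ref{conv_prop}, behaves as $O\left(\sum_{\alpha\in I,\alpha\neq\alpha_I}d_\alpha^2/\lev^2\right)$ for $\lev\gg d_\alpha$ by the asymptotics of the zeros of the Jacobi polynomials. I expect the only genuinely delicate point to be the hinge between the two propositions: one must check that Proposition \ref{trivial_classical} remains applicable to measures $\tilde{p}_I$ defined on the continuous spaces of local pure states rather than on finite alphabets, and that the resulting $\sigma$ has well-defined marginals matching $\tilde{\rho}_I$ exactly. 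The remaining work — verifying $\sigma_I=\tilde{\rho}_I$ and translating \eqref{poly_approx} into the stated $O(1/\lev^2)$ rate — is routine.
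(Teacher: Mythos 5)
Your proposal is correct and follows essentially the same route as the paper, which obtains Theorem \ref{charac_chordal} precisely by composing Proposition \ref{conv_prop} (locally compatible separable approximants $\tilde{\rho}_I$ with the Jacobi-polynomial bound) with Proposition \ref{trivial_classical} (chordality plus maximal cliques yields a global measure, hence a separable extension whose marginals are exactly the $\tilde{\rho}_I$). Your worry about continuous variables is harmless, since the proof of Proposition \ref{trivial_classical} in Appendix \ref{runningApp} is stated directly for measures and integrals, not finite alphabets.
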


As we argue in the next section, there exist relevant experimental scenarios with a chordal dependency graph. For those, the above theorem guarantees that $\bar{\hier}$ provides a complete entanglement characterization. 

Moreover, Theorem \ref{charac_chordal} can be useful to characterize the set of separable ensembles even in marginal scenarios where the dependency graph is not chordal. Indeed, suppose that we wish to decide whether the ensemble of density matrices $\{\rho_I\}_{I\in\Ind}$ are the marginals of an overall separable state, but the dependency graph ${\cal G}$ of $\Ind$ is not chordal. Let $\bar{{\cal G}}$ be a chordal graph with vertices $\A$ such that ${\cal G}$ is a subgraph of $\bar{{\cal G}}$, and call $\bar{\Ind}$ the set of maximal cliques of $\bar{{\cal G}}$. Since ${\cal G}$ is a subgraph of $\bar{{\cal G}}$, it follows that there exists a function $c:\Ind\to\bar{\Ind}$ such that $I\subset c(I)$, for all $I\in\Ind$. Thus, by Theorem \ref{charac_chordal}, the following  hierarchy of SDPs fully characterizes separability in $\Ind$:

\begin{align}
&\exists \{\rho_I\}_{\bar{I}\in \bar{\Ind}}\in \bar{\hier}^\lev(\bar{\Ind}), \nonumber\\
\mbox{such that }&\tr_{c(I)\setminus I}(\rho_{c(I)})=\rho_{I},\forall I\in\Ind.
\label{completion}
\end{align}

In graph-theoretic terminology, we would say that $\bar{{\cal G}}$ is a \emph{chordal completion} of ${\cal G}$, and, to minimize the time and space complexity of the hierarchy above, we need that the size of the maximal clique of $\bar{{\cal G}}$ be as small as possible. Finding the chordal completion of an arbitrary graph with minimum maximum clique size is known to be an NP-complete problem \cite{tree_width}. However, for fixed $k\in\N$, deciding whether a graph admits a chordal completion of maximum clique size $k$ or less is a problem that can be solved in linear time on $|\A|$ \cite{tree_width_2}. Since $k$ is fixed by our computer memory, it follows that, for any entanglement marginal problem, we can use the algorithm in \cite{tree_width_2} to assess if there exists a tractable SDP hierarchy of the form (\ref{completion}) to tackle it.

Graphs admitting a chordal completion with maximum clique size $k+1$ are called \emph{partial $k$-trees}, and they have been studied extensively. They include cactus graphs, pseudoforests, series-parallel graphs, outerplanar graphs, Halin graphs and Apollonian networks \cite{bounded_tree_width}. If the dependency graph of a given marginal scenario belongs to one of such families, then one can solve the corresponding entanglement marginal problem efficiently. 
As we will analyse in detail next, the above families of graphs comprise some experimentally relevant scenarios for entanglement detection in many-body quantum systems.

\subsection{Particles in a star configuration}
Consider first the scenario where we have access to all the two-body reduced states $\lbrace \rho_{1j} \rbrace_{j = 2}^n$ of particle $1$ and each of the remaining particles $\{2,...,n\}$. Determining whether the overall state of the system is entangled amounts to solving the entanglement marginal problem for $\Ind=\{I_j:j=2,...,n\}$, with $I_j\equiv \{1,j\}$. The corresponding dependency graph is shown in Figure \ref{star_pic}.

\begin{figure}
  \centering
  \includegraphics[width=5cm]{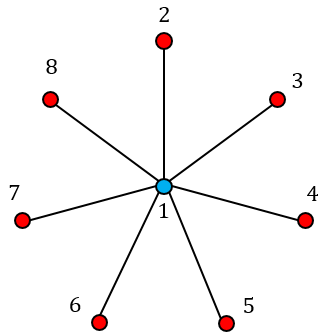}
  \caption{\textbf{The star configuration.} The dependency graph consists of the edges $I_j\equiv \{1,j\}$, for $j=\{2,...,n\}$.}
  \label{star_pic}
\end{figure}

By simple inspection, we notice that the graph is chordal. Moreover, for $j\not=1$, system $j$ only appears in $I_j$. Hence, by Proposition \ref{conv_prop}, the hierarchy $\bar{\hier}$, with $\alpha_j=j$ for $j=2,...,n$, is convergent. Writing it explicitly, the SDP corresponding to the ${\lev}$-th level of $\bar{\hier}$ reads as follows
\begin{eqnarray}\label{eq:SDPstar}
 & \exists \, \, & \lbrace \rho_{a_1a_2\ldots a_{\lev}b|j} \in B(\H_\text{sym}({\lev},d_1)\otimes\H_{d_j})\rbrace_{j = 2}^n \, , \nonumber \\
& \quad\text{s.t.} &  \forall \, \, j \, \in \, \lbrace 2,\ldots, n \rbrace  : \nonumber \\
 & \text{(i)} & \rho_{a_1a_2\ldots a_{\lev}b|j}\succeq 0\,, \nonumber \\
 & \text{(ii)} & \tr_{(a_2\ldots a_{\lev})} (\rho_{a_1a_2\ldots a_{\lev}b|j}) = \rho_{1j}  \\
 & \text{(iii)} &  \rho_{a_1a_2\ldots a_{\lev} b|j}^{T_{a_1\ldots a_k}} \succeq 0 \quad k = 1,\ldots,{\lev} \nonumber \\
 &\text{and} & \forall \, \, k \, \in \lbrace 3,\ldots n \rbrace \, : \nonumber \\
 &\text{(iv)} & \tr_{b} (\rho_{a_1a_2\ldots a_{\lev}b|2}) = \tr_{b} (\rho_{a_1a_2\ldots a_{\lev}b|k}) \nonumber
\end{eqnarray}
Note that, in this case, implementing $\hier^{\lev}(\Ind)$ would require optimizing over positive semidefinite matrix variables $\rho^{({\lev})}_{I}$ of size $O({\lev}^{d_1+d_j-2})$, each of them acting on the space $\H_\text{sym}^{{\lev},d_1}\otimes\H_\text{sym}^{{\lev},d_j}$. Running $\bar{\hier}^{\lev}(\Ind)$, on the contrary, just requires extending system $1$, and so the corresponding positive semidefinite matrix variables are of size $O({\lev}^{d_1-1})$.

Let us then see how the hierarchy \eqref{eq:SDPstar} performs in practice. Suppose that all systems are qubits ($d_j=2$ for $j=1,...,n$), and $n=4$. Let $X,Y,Z$ denote the Pauli matrices, and consider the problem of computing the separable energy of the Hamiltonian $H=X_1X_2+Y_1Y_3+Z_1Z_4$, i.e., we wish to minimize the term 

\be
\tr(\rho_{I_2}X\otimes X)+\tr(\rho_{I_3}Y\otimes Y)+\tr(\rho_{I_4}Z\otimes Z)
\ee
\noindent over all ensembles $\{\rho_I:I\in\Ind\}$ admitting a separable extension.

An SDP optimization over $\{\rho_I:I\in\Ind\}\in\bar{\hier}^2$ returns the exact value $-\sqrt{3}\approx -1.7320$, achievable by the product state 
\be
\sigma=\frac{\id+\frac{X+Y+Z}{\sqrt{3}}}{2}\otimes\frac{\id-X}{2}\otimes\frac{\id-Y}{2}\otimes\frac{\id-Z}{2}.
\ee
\noindent This optimization, and all the following ones, was carried out using the MATLAB optimization package YALMIP \cite{yalmip}, in combination with the SDP solver Mosek \cite{mosek}.

It is interesting to compare the exact value $-\sqrt{3}$ with that obtained by minimizing $H$ instead over ensembles of separable states $\{\rho_I:I\in\Ind\}$ satisfying local compatibility. For two-qubit systems, the PPT condition is equivalent to separability \cite{Horodecki_1996}: this allows us to carry out such an optimization exactly via SDP. The result is $-3$. This example shows that the set of locally compatible separable states can be a very bad outer approximation to the set of ensembles admitting a separable extension. Moreover, as proven in \cite{marginal1,marginal2, oneD}, this is so even when one demands in addition the existence of an overall quantum state for the state ensemble. The need of constructions like our hierarchies to address the entanglement marginal problem is thus vindicated.

\subsection{Particles in a line}
\label{lineSec}

\begin{figure}
  \centering
  \includegraphics[width=8cm]{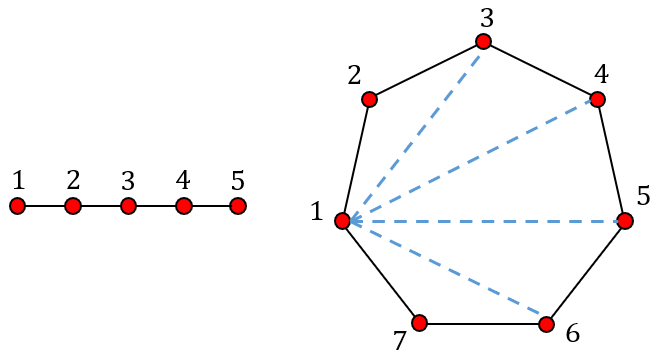}
  \caption{\textbf{The 1D case.} Left: the dependency graph of a 1D chain with open boundary conditions, given access to nearest neighbors. Right: the dependency graphs for the 1D case under closed boundary conditions. The solid lines indicate the original dependency graph. Adding extra edges (dashed lines), we obtain a chordal graph, with maximal cliques of the form $\{1,j,j+1\}$.}
  \label{spinChain}
\end{figure}

Let us consider next the very natural experimental scenario with particles $\A=\{1,...,n\}$ distributed in a line. We are given access to the reduced density matrices of the nearest neighbors $\rho_{j j+1}$ for $j = 1,\ldots,n-1$. Namely, ${\cal I}=\{I_j\}_{j=1}^{n-1}$ with $I_{j}=\{j, j+1\}$. The dependency graph of $\Ind$, a linear graph, is again chordal, and $\{I_j\}_j$ are its maximal cliques. Therefore, the SDP hierarchies $\hier$, $\bar{\hier}$ are complete. In this case, the complexity of implementing one or the other is similar, so from now on, we will stick to $\hier$. For the sake of clarity, let us state explicitly the SDP corresponding to the ${\lev}$-th level of such hierarchy, namely 
\begin{eqnarray}\label{eq:SDPline}
 & \exists \, \,  &\lbrace \rho_{a_1\ldots a_{\lev} b_1 \ldots b_{\lev}|j} \in B(\H_j) \rbrace_{j = 1}^{n-1} \, , \nonumber \\
 & \text{s.t.} &   \forall \, \, j \, \in \, \lbrace 1,\ldots, n-1 \rbrace  : \nonumber \\
 &\text{(i)}& \rho_{a_1\ldots a_{\lev} b_1 \ldots b_{\lev}|j} \, \succeq \, 0   \nonumber \\
 & \text{(ii)} & \tr_{(a_2\ldots a_{\lev} b_2\ldots b_{\lev})} (\rho_{a_1\ldots a_{\lev} b_1 \ldots b_{\lev}|j}) = \rho_{j j+1}   \\
& \text{(iii)} & \rho_{a_1\ldots a_{\lev} b_1 \ldots b_{\lev} |j}^{T_{a_1\ldots a_k} T_{b_1 \ldots b_l}} \succeq 0 \quad k,l = 1,\ldots,{\lev}  \nonumber \\
& \text{(iv)} & \tr_{(a_1\ldots a_{\lev})} (\rho_{a_1\ldots a_{\lev} b_1 \ldots b_{\lev}|j}) \nonumber\\
&&= \tr_{(b_1\ldots b_{\lev})}(\rho_{a_1\ldots a_{\lev} b_1 \ldots b_{\lev}|j+1}) \, , \nonumber
\end{eqnarray}
\noindent where $\H_j\equiv \H_\text{sym}(d_j,{\lev}) \otimes \H_\text{sym}(d_{j+1},{\lev} ) $.

As in the star configuration, the cost of implementing the ${\lev}^{th}$ test of $\hier$ on the line, and hence certifying that the ensemble is $O(1/{\lev}^2)$-close to that of a separable state, is just polynomial on the system size $n$. In fact, even though the number of SDP matrix variables is linear in $n$, their size is \emph{independent of $n$}. This allows one to tackle systems of hundreds of sites in a normal desktop, as we see next.

For our example, we take a 1-D Ising-like model, represented by the $2$-local Hamiltonian $H=-(1/2)\sum_{j=0}^{n-1} X_jY_{j+1}$ with periodic boundary conditions. As shown in \cite{oneD}, in the limit $n\to\infty$, any quantum state with support in the (very) degenerate ground state space of this Hamiltonian is entangled.

The nearest-neighbour reduced density matrices of the thermal state $\rho_{\beta} = e^{-\beta H}/Z$ can be computed efficiently by performing a Jordan-Wigner transform and following the methods of \cite{barouch1971statistical} (see also \cite{tura2017energy}).
We wish to determine the critical inverse temperature $\beta_S$ for which the nearest-neighbour two-body marginals of $\rho_{\beta_S}$ admit a separable extension. 

Note that there exist quantum systems with more than one critical separability temperature, i.e., the thermal state of a system can be entangled for very low temperatures, separable for moderately low temperatures, entangled for moderately high temperatures and separable for very high temperatures \footnote{A two-qubit Hamiltonian, generated by random sampling, that exhibits such an aberrant thermodynamical behavior is given by $h=\left(\begin{array}{cccc}
1.3398&0.9526&0.2617&1.8461\\
0.9526&0.8519&0.2829&-1.1422\\
0.2617&0.2829&-2.2228&0.7696\\
1.8461&-1.1422&0.7696&0.0476\end{array}\right)$.}. According to our numerical tools, though, the Hamiltonian $H$, as well as the Hamiltonians that we will next introduce in this section, seems to have a single critical temperature. 

As shown in Figure \ref{fig:XY}, implementing the hierarchy \eqref{eq:SDPline} at level ${\lev} = 2$ allows us to upper bound the exact value of $\beta_S$ for systems of up to few hundred particles. More specifically, the nearest-neighbor density matrices of the thermal state of $H$ did not satisfy conditions \eqref{eq:SDPline} for inverse temperatures $\beta$ slightly greater than the values plotted in Figure \ref{fig:XY}. Curiously, increasing the level to ${\lev} = 3$ does not decrease our upper bounds on the critical temperature.

Determining (or at least upper-bounding) the critical temperature below which the thermal state of a many-body system is entangled had already been considered in several works \cite{spinModels,spinSq1,dowling2004energy,
brukner2004macroscopic,hide2007witnessing,
nakata2009thermal,hauke2016measuring,bose2005thermal}. Most of the above methods rely on a specific entanglement witness based on linear or nonlinear functions of some reduced density matrices of the system.  Our technique encompasses all these methods,  at least in the asymptotic limit of convergence of the hierarchy,  when applied to the same marginal scenario.  Let us also mention that most of the previous works focus on translationally invariant models for which we dedicate a specific comparison in the next Section.
A more interesting comparison can be made with the approach in \cite{spinSq1}, where the authors tackle this problem via spin-squeezing inequalities. To do so, they must assume that all two-body correlators are available, i.e., the authors are working in the marginal scenario $\Ind'=\{\{j,k\}:j>k\}$. Even though spin-squeezing inequalities do not fully characterize the set of all separable two-party correlators, it is interesting to see whether they are advantageous with respect to (complete) methods of entanglement detection via nearest-neighbor correlations.

In this regard, we find that spin-squeezing inequalities are not able to detect the entanglement of the thermal state of the Hamiltonian above, not even at zero temperature. On the other hand, when applied to detect the entanglement of the thermal state of the Heisenberg model, with $H=\sum_{i=1}^{n-1} (X_iX_{i+1}+Y_iY_{i+1}+Z_iZ_{i+1})$, spin-squeezing inequalities do offer an advantage, as they provide smaller upper bounds on $\beta_S$ than level $L=2$ of hierarchy (\ref{eq:SDPline}) for $n=3,...,8$. Since higher orders of the hierarchy did not seem to improve the bound on the critical temperature, we conjecture that, for some range of temperatures where the thermal state of the Heisenberg Hamiltonian is entangled, there exist separable states with the same nearest-neighbor correlators (but different two-body correlators).

The advantages of spin-squeezing inequalities seem to disappear, however, when the system under consideration is not translation-invariant. To prove this point, we consider spin-glass systems with a Hamiltonian of the form $H_{\bar{\alpha}}=\sum_i \alpha_i(X_iX_{i+1}+Y_iY_{i+1}+Z_iZ_{i+1})$, where $(\alpha_i)_i$ are random variables, independently sampled from a uniform distribution in the interval $[0,1]$. We considered systems of size $n=3,...,8$. For each system size we generated $600$ such random Hamiltonians and used spin-squeezing inequalities, as well as the level $L=2$ of hierarchy (\ref{eq:SDPline}), to ascertain which method produced smaller upper bounds on $\beta_S$. The results are shown in Table \ref{spin-glass}. As the reader can appreciate, the second level of the hierarchy (\ref{eq:SDPline}) provides better bounds for $\beta_S$ in most of the cases, and the advantage seems to grow with the system size.

\begin{table}
\begin{center}
\begin{tabular}{|c|c|}
\hline
System size&$\eta$\\
\hline
3&0.74\\
4&0.75\\
5&0.78\\
6&0.81\\
7&0.81\\
8&0.83\\
\hline
\end{tabular}
\end{center}
\caption{\label{spin-glass}Fraction of times $\eta$ that the second level of the hierarchy (\ref{eq:SDPline}) produced an upper bound on $\B_S$ (for the random Hamiltonian $H_{\bar{\alpha}}$) greater than that obtained through spin-squeezing inequalities.}
\end{table}

\subsection{Particles in a ring}

\begin{figure}
  \centering
  \includegraphics[width=8cm]{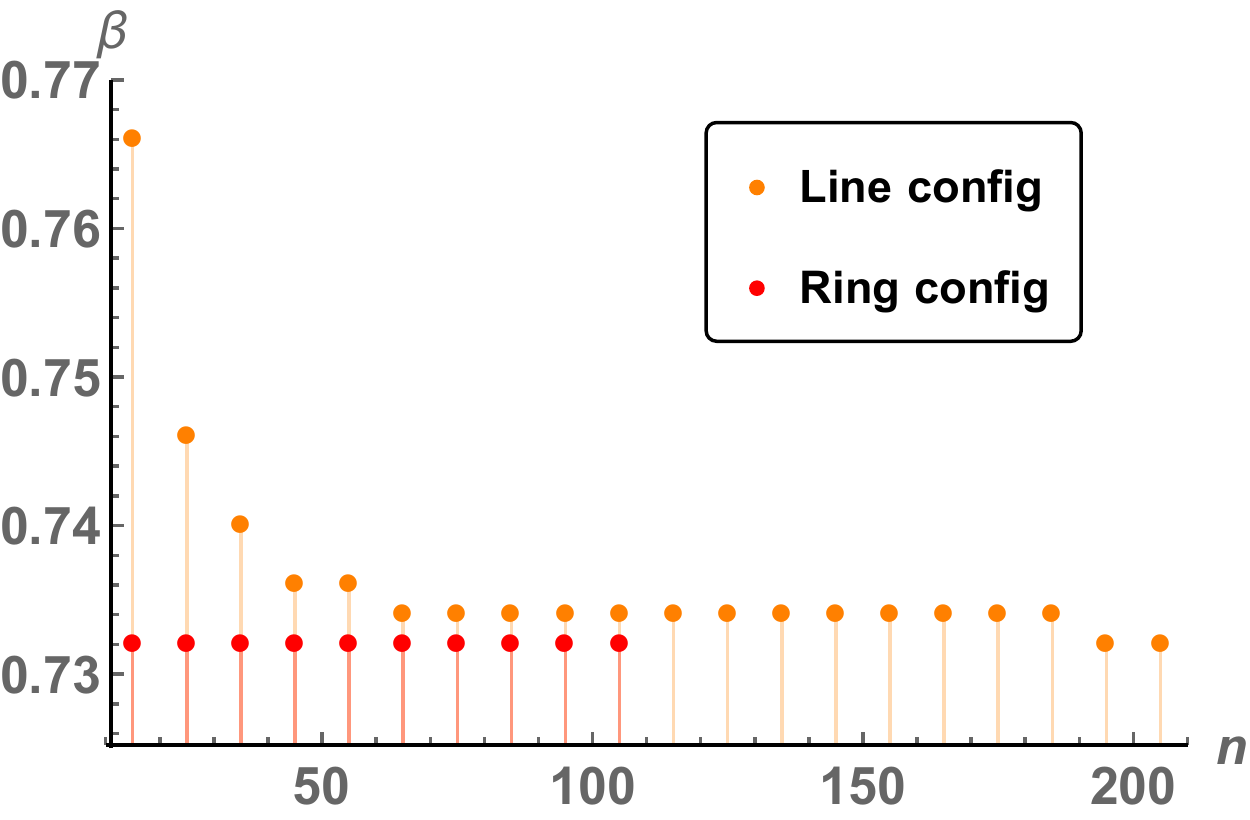}
  \caption{Upper bounds on the critical inverse temperature of separability for the thermal state of $H=-(1/2)\sum_{j=0}^{n-1} X_jY_{j+1}$, as a function of the number of particles. The results are obtained by the hierarchy $\mathbb{H}$ at the level ${\lev} = 2$ for a scenario where all the near-neighbours two-body reduced states are given. The plot compares the critical $\beta$ between the case where the information of the boundary term $\rho_{1n}$ is given (red) or not (orange). Although the hierarchy for both cases has a favourable scaling with $n$, we decided to show only results that require less than a few hours of computation in a normal laptop.}
  \label{fig:XY}
\end{figure}

Let us now complicate our 1D scenario by closing the spin chain, i.e., by assuming that we are given the knowledge also of $\rho_{1n}$. In the power set notation, such a scenario corresponds to choosing the sets of indices $I_{j}=\{j, j \mbox{ mod }n+ 1 \}$, for $j=1,...,n$. This example is useful to illustrate how one can deal with situations in which the dependency graph ${\cal G}$ of $\Ind=\{I_j\}_j$ is not chordal, see Figure \ref{spinChain}. Hence, in this scenario local compatibility of marginal probability distributions does not imply global compatibility. Thus, a modification of \eqref{eq:SDPline} by simply introducing an additional symmetric extension $\rho_{a_1\ldots a_{\lev} b_1 \ldots b_{\lev}|n}$ of the reduced density matrix $\rho_{1n}$ would lead to a sound but not convergent hierarchy of relaxations.

As explained above, one can derive a convergent SDP hierarchy for this marginal scenario by completing the original dependency graph ${\cal G}$ of $\Ind$. In this regard, Figure \ref{spinChain} shows a possible chordal completion $\bar{{\cal G}}$ for ${\cal G}$, consisting in selecting one of the nodes (say, node $1$) and adding edges with respect to any other node. The maximal cliques of the new (chordal) graph are now $\bar{I}_j \equiv \{1,j+1,j + 2\}$, for $j=1,...,n-2$. Call $\bar{{\cal I}}$ the set of all such sets. Based on these facts, to solve the 1D entanglement marginal problem for nearest-neighbour states with closed boundary conditions, we consider a slightly modified SDP hierarchy whose ${\lev}^{th}$ level is given by:
\begin{align}\label{eq:SDPring}
&\exists\{\bar{\rho}_{\bar{I}}\}_{\bar{I}\in \bar{\Ind}} \in \hier^{\lev}(\bar{\Ind})
\mbox{ s.t.}\nonumber\\
&\tr_{1}\bar{\rho}_{\bar{I}_j}=\rho_{j+1j+2}, \mbox{ for } j =1,...,n-2,\nonumber\\
&\tr_{3}\bar{\rho}_{\bar{I}_1}=\rho_{12}, \tr_{n-1}\bar{\rho}_{\bar{I}_{n-2}}=\rho_{1n}.
\end{align}
This hierarchy is now complete. Moreover, by Theorem \ref{charac_chordal} and the monotonicity of the trace distance, it follows that, for any ensemble $\{\rho_I\}_{I\in\Ind}$ of states passing the $L^{th}$ level, there exists a separable state $\sigma$ with $\|\rho_I-\sigma_I\|_1\leq O\left(\frac{3d^2}{L^2}\right)$, for all $I\in\Ind$.

Remarkably, in the 1D case considered here, the polynomial scaling with $n$ of the computational effort of \eqref{eq:SDPring} is preserved, since the size of the  sets $\bar{I}_j$ is independent of $n$.
Thanks to this favourable scaling, we can use \eqref{eq:SDPring} to upper bound the critical temperatures for entanglement in the 1D model defined above for the case of periodic boundary conditions as well, the results appearing in Figure \ref{fig:XY}.

Before moving on, some comments on related work are in order. In \cite{meanField} it was noticed that, for 1D Hamiltonians with closed or open boundary conditions, the minimum energy achievable via mean field theory (namely, the separable energy) can be computed up to a fixed error with time complexity polynomial on the system size using $\epsilon$-nets. The hierarchies introduced here provide a similar result: by optimizing the energy over the set of marginals compatible with $\hier$, one obtains lower bounds to the mean field energy, whose distance to the exact results can be controlled by adjusting the hierarchy level.

\section{Infinite systems with symmetries}
\label{infiniteSec}

The previous tools can be even applied to solve the entanglement marginal problem for infinite systems, that is, with $|\A|=\infty$. Of course, the most general instance of the problem is out of reach because it requires specifying an infinite number of reduced states. However, the problem can be solved in some instances where the global system is supposed to satisfy a symmetry. The connection to the classical marginal problem plays again an important role.

\subsection{Infinite symmetric scenarios where the entanglement marginal problem is trivial}
In section \ref{oneDSec} we will show how, with minor modifications, hierarchy $\hier$ can be used to fully characterize the marginals of 1D separable TI quantum states. Before getting there, though, it is worth exploring in which situations implementing $\hier$ would be an overkill. In fact, as we next prove, if the system under study satisfies certain symmetries present in most studied models in statistical physics, then the entanglement marginal problem can be solved with relative ease.

Consider a scenario where physical systems lie on the sites of a two-dimensional square lattice. That is, $\A=\Z^2$ and $d_\alpha=d$ for all $\alpha\in\A$. We suppose the overall quantum state $\rho$ describing the systems in the lattice to be TI and also symmetric under reflections of the vertical and horizontal axes. 
This scenario appears quite frequently in statistical physics. At any temperature, the thermal state of any lattice Hamiltonian that is both translation and reflection invariant is described by a lattice state $\rho$ with these two symmetries.

In this context, consider the entanglement marginal problem with $\Ind=\{(x,y) + \hat{I}:x,y\in\Z\}$; where $\hat{I}=\{(x,y): x,y=1,2\}$, i.e., we consider all the $2\times 2$ plaquettes. By translation invariance, we have that $\rho_I=\rho_J=:\hat{\rho}$, for all $I,J\in\Ind$. The input of the entanglement marginal problem is therefore finite: it suffices to provide the $4$-partite experimental matrix $\hat{\rho}$, which we take to act on the sites in $\hat{I}$. Furthermore, by reflection symmetry we have that $\hat{\rho}$ is invariant under the permutation operators $\P_{H}$, $\P_{V}$, where $H$ ($V$) denotes the permutation that exchanges systems $(x,1)$ and $(x,2)$ for $x=1,2$ ($(1,y)$ and $(2,y)$ for $y=1,2$).

In this scenario, the marginal entanglement problem is trivial, in the following sense: $\hat{\rho}$ is the marginal of a separable TI reflection invariant (RI) quantum state iff $\hat{\rho}$ is separable and invariant under the permutations $\P_H,\P_V$. The ``only if'' implication is trivial; in the next lines we prove the ``if'' part. 

Let $\hat{\rho}$ be a $4$-partite fully separable quantum state, invariant under $\P_{H}$, $\P_{V}$. Then, there exists a measure $p_{\hat{I}}(\phi_{\hat{I}})d\phi_{\hat{I}}$ generating $\hat{\rho}$. Consider now the measure defined by 

\be
\hat{p}_{\hat{I}}(\phi_{\hat{I}})=\frac{1}{4}\sum_{\pi\in\{\id, H, V, HV\}}p_{\hat{I}}(\phi_{\pi(\hat{I})}).
\ee
\noindent This measure is symmetric under the action of $H,V$ on the $4$-dimensional vector $\phi_{\hat{I}}$. Moreover, it generates the separable state 

\begin{align}
&\frac{1}{4}\left(\hat{\rho}+\P_H\hat{\rho}\P_H^\dagger+\P_V\hat{\rho}\P_V^\dagger+\right.\nonumber\\
&\left.\P_V\P_H\hat{\rho}(\P_V\P_H)^\dagger\right)=\hat{\rho}.
\end{align}

In \cite{twoD}, it is shown that any reflection-invariant distribution $\hat{p}_{\hat{I}}(\phi_{\hat{I}})$ defined on a $2\times 2$ plaquette admits an extension $\hat{p}(\phi)$ to the whole square lattice that is also TI and RI. It follows that $\hat{\rho}$ admits a TI, RI separable extension given by $\hat{p}(\phi)$.

Unfortunately, this trick cannot be extended to characterize entanglement in, e.g., $3\times 3$ site plaquettes $I=\{(x,y):x,y=1,2,3\}$. The reason is that there exist LTI distributions $p_I(a_I)$ for the $3\times 3$ plaquette, invariant under reflections and even 90 degree rotations, which nonetheless do not admit a TI extension \cite{TI}.

The result in \cite{twoD} on the classical marginal problem generalizes to square lattices of all spatial dimensions. We arrive at the following result.

\begin{prop}
\label{trivial}
Let $\A=\Z^D$ and $\hat{I}=\{1,2\}^D$. A state $\hat{\rho}$ defined in $\hat{I}$ is the marginal of a translation and reflection invariant separable state for the whole hypercubic lattice iff $\hat{\rho}$ is fully separable and symmetric under the reflection of each orthogonal axis.
\end{prop}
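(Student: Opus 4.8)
The plan is to mirror the two-step argument just given for the $D=2$ plaquette, replacing the four-element reflection group $\{\id,H,V,HV\}$ by the full group of axis reflections in dimension $D$. The ``only if'' direction is immediate: if $\hat{\rho}=\tr_{\A\setminus\hat{I}}(\rho)$ for a fully separable, translation- and reflection-invariant $\rho$, then $\hat{\rho}$ is the marginal of a separable state and hence separable, and it inherits from $\rho$ the invariance under the reflection of each orthogonal axis.

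For the converse, suppose $\hat{\rho}$ is fully separable and invariant under each permutation $\P_i$ that reflects the $i$-th axis (i.e.\ that swaps $1\leftrightarrow 2$ in the $i$-th coordinate of every site of the plaquette), for $i=1,\dots,D$. Since $\hat{\rho}$ is separable, a measure $p_{\hat{I}}(\phi_{\hat{I}})d\phi_{\hat{I}}$ generates it via the $2^D$-fold product of Eq.~\eqref{separablemarg}. First I would symmetrize this measure over the reflection group $G=\langle \P_1,\dots,\P_D\rangle\cong(\Z_2)^D$, setting
\be
\hat{p}_{\hat{I}}(\phi_{\hat{I}})=\frac{1}{|G|}\sum_{\pi\in G}p_{\hat{I}}(\phi_{\pi(\hat{I})}).
\ee
By construction $\hat{p}_{\hat{I}}$ is invariant under the action of every reflection on the label vector $\phi_{\hat{I}}$. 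Using the hypothesis $\P_i\hat{\rho}\P_i^\dagger=\hat{\rho}$ (hence $\P_\pi\hat{\rho}\P_\pi^\dagger=\hat{\rho}$ for all $\pi\in G$), each term of the symmetrized measure generates the same state $\hat{\rho}$, so $\hat{p}_{\hat{I}}$ still generates $\hat{\rho}$.

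The final step extends $\hat{p}_{\hat{I}}$ to the whole lattice. For $D=2$ this is precisely the classical-marginal result of~\cite{twoD} quoted above; for general $D$ I would invoke (or reproduce) its announced extension: any reflection-invariant probability distribution on the $\{1,2\}^D$ plaquette admits a joint translation- and reflection-invariant extension $\hat{p}(\phi)$ to all of $\Z^D$. Plugging $\hat{p}(\phi)$ into the separable-state construction of Eq.~\eqref{separable} yields a fully separable, TI, RI state on $\A=\Z^D$ whose $\hat{I}$-marginal is, by the previous step, exactly $\hat{\rho}$. The main obstacle is this last classical statement in arbitrary dimension: one must show that reflection invariance on a single $2^D$-site plaquette already forces a globally consistent TI, RI measure. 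That this is genuinely delicate is signalled by the text's own remark that the analogue fails once the plaquette is enlarged to $3\times 3$; the whole argument is special to having exactly two sites per axis. Everything else---the symmetrization and the dictionary between measures and separable states---is routine.
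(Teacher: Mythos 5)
Your proposal is correct and follows essentially the same route as the paper: the ``only if'' direction is noted to be immediate, the ``if'' direction symmetrizes the generating measure over the group of axis reflections (which still generates $\hat{\rho}$ by its reflection invariance), and the extension to the whole lattice is obtained by invoking the classical result of the reference on reflection-invariant distributions on the $\{1,2\}^D$ plaquette, generalized to arbitrary $D$ exactly as the paper does. Your closing caveat about the classical extension step matches the paper, which likewise asserts (rather than reproves) that generalization.
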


In order to characterize such marginals, it is thus enough to impose symmetry and use any separability criterion, e.g., the DPS hierarchy \cite{DPS3}, to enforce separability. 

For the infinite spin chain ($D=1$), the proposition implies that the set of nearest-neighbor marginals of separable TI and RI states equals the set of separable bipartite states $\hat{\rho}_{1,2}$, with $\hat{\rho}_{2,1}=\hat{\rho}_{1,2}$. For $d=2$, the set of such states can be characterized analytically: it corresponds to the set of bipartite PPT states with symmetry under reflections. Let us remark that the spin-$1/2$ chain with translation and reflection invariance comprises the most studied integrable models in 1D, such as the XY and the Heisenberg models.

On this subject, it is worth noting that in \cite{spinModels} T\'oth shows how to compute the exact separable energy per site of 2-local translation-invariant Hamiltonians with reflection symmetry. For $D=1$ systems, he observes that the separable energy can be achieved by states of the form $...\ket{\psi_1}\ket{\psi_2}\ket{\psi_1}\ket{\psi_2}...$. This can be regarded as the dual of our result, since the extreme points of symmetric separable bipartite states are of the form $\hat{\rho}_{12}=\frac{1}{2}(\proj{\psi_1}\otimes\proj{\psi_2}+\proj{\psi_2}\otimes\proj{\psi_1})$, and thus $\tr(\hat{\rho}_{12}H_{12})=\tr\{(\proj{\psi_1}\otimes\proj{\psi_2})H\}$, for all operators $H$ such that $H_{12}=H_{21}$. For higher spatial dimensions, Proposition \ref{trivial} extends T\'oth's results from Hamiltonians with nearest-neighbor interactions to Hamiltonians with general (reflection-symmetric) plaquette operators.

\subsection{Translation invariance in one dimension}
\label{oneDSec}

What happens when we care about next-to-nearest neighbor correlations, or lose invariance under reflections? Then the entanglement marginal problem becomes non-trivial, and SDP hierarchies such as $\hier$ are needed. In this section, we study how to solve the entanglement marginal problem for infinite TI 1D chains.

Consider thus a scenario where quantum systems of dimension $d$ rest on the sites of an infinite spin chain, i.e., $\A=\Z$. As in the previous section, we assume that the overall quantum state is TI, but not necessarily invariant under reflections. The input of the entanglement marginal problem is $\hat{\rho}$, the reduced density matrix of $k$ consecutive neighbors, i.e., $\hat{\rho}=\rho_{\hat{I}}$, with $\hat{I}=\{1,...,k\}$. We wish to determine whether $\hat{\rho}$ is the marginal of a separable, TI quantum state.

Suppose, for a moment, that such is the case. In \cite{oneD} it is shown that, should there exist an overall TI separable state, the global measure $p(\phi)d\phi$ in eq. (\ref{separable}) can also be chosen TI. Hence the $k$-variate distribution $p_{\hat{I}}$ generating the separable state $\hat{\rho}$ can be assumed LTI. This observation implies that $\hat{\rho}^{({\lev})}$, as defined in (\ref{truqui}) is also LTI, i.e., it satisfies

\be
\tr_{\{1\}^{\lev}}(\hat{\rho}^{({\lev})})=\tr_{\{k\}^{\lev}}(\hat{\rho}^{({\lev})}).
\label{LTI_quantum}
\ee

This suggests tackling the entanglement marginal problem with an SDP hierarchy $\hat{\hier}$ whose ${\lev}^{th}$ level is given by the SDP:
\begin{align}
&\exists \hat{\rho}^{({\lev})}\in B(\H_\text{sym}^{{\lev},d})^{\otimes k}, \hat{\rho}^{({\lev})} \, \succeq 0 \, \, \, \text{ s.t.}\nonumber\\
&\tr_{\hat{I}^{{\lev}-1}}(\hat{\rho}^{({\lev})})=\hat{\rho},\nonumber\\
&\tr_{\{1\}^{\lev}}(\hat{\rho}^{({\lev})})=\tr_{\{k\}^{\lev}}(\hat{\rho}^{({\lev})}).
\label{TI_SDP}
\end{align}

\begin{theo}
Let $\hat{\rho}\in \hat{\hier}^\lev$. Then, there exists a translation invariant separable state $\sigma$ for the infinite spin chain such that

\be
\|\hat{\rho}-\sigma_{\hat{I}}\|_1\leq O\left(\frac{kd^2}{\lev^2}\right).
\ee
\end{theo}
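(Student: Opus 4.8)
The plan is to run the one-dimensional translation-invariant (TI) analogue of the argument behind Theorem~\ref{charac_chordal}, with the role of Proposition~\ref{trivial_classical} (triviality of the classical marginal problem on a chordal graph) taken over by the triviality of the one-dimensional TI classical marginal problem established in~\cite{oneD}. Concretely, given $\hat{\rho}\in\hat{\hier}^\lev$, I would take a feasible point $\hat{\rho}^{(\lev)}$ of the SDP~(\ref{TI_SDP}) and feed it into the construction underlying Proposition~\ref{conv_prop}, based on the linear maps of~\cite{NOP}. That construction uses only the symmetric-extension condition $\hat{\rho}^{(\lev)}\in B(\H_\text{sym}^{\lev,d})^{\otimes k}$ together with $\tr_{\hat{I}^{\lev-1}}(\hat{\rho}^{(\lev)})=\hat{\rho}$, and not the PPT constraints (which only sharpen the relaxation), so it applies here with $\hat{\rho}^{(\lev)}$ playing the role of a single marginal $\rho_I^{(\lev)}$, $I=\hat{I}=\{1,\dots,k\}$, in which every one of the $k$ sites is $\lev$-times extended. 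This produces a fully separable state $\tilde{\rho}$, generated by a measure $\tilde{p}_{\hat{I}}(\phi_{\hat{I}})$, with
\be
\|\hat{\rho}-\tilde{\rho}\|_1\leq 2\left\{1-(1-\epsilon(\lev,d))^k\right\}=O\!\left(\frac{kd^2}{\lev^2}\right).
\ee

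The next step is to promote $\tilde{p}_{\hat{I}}$ to a measure on the whole chain, and for this I would exploit that the map of~\cite{NOP} acts independently and identically on each of the $k$ sites. Consequently it commutes with the site-wise partial traces in the LTI constraint $\tr_{\{1\}^\lev}(\hat{\rho}^{(\lev)})=\tr_{\{k\}^\lev}(\hat{\rho}^{(\lev)})$: applying the map to the $k-1$ surviving sites on both sides and reading off the induced measure yields $\tilde{p}_{\{1,\dots,k-1\}}=\tilde{p}_{\{2,\dots,k\}}$, i.e. $\tilde{p}_{\hat{I}}$ is locally translation invariant. This is exactly the ``locally compatible'' clause of Proposition~\ref{conv_prop}, specialised to the single-marginal-with-LTI scenario. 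With LTI secured, I would invoke~\cite{oneD}: every LTI distribution on $k$ consecutive classical sites admits a TI extension $\tilde{p}(\phi)$ to all of $\Z$. Through~(\ref{separable}), $\tilde{p}$ generates a TI separable state $\sigma$ with $\sigma_{\hat{I}}=\tilde{\rho}$, and combining with the bound above gives $\|\hat{\rho}-\sigma_{\hat{I}}\|_1\leq O(kd^2/\lev^2)$, as claimed.

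The main obstacle is the middle step: confirming that the separable approximation supplied by the NOP construction preserves local translation invariance. Everything hinges on the map being strictly site-local and uniform across sites, so that the quantum LTI relation on $\hat{\rho}^{(\lev)}$ descends to the classical LTI relation on $\tilde{p}_{\hat{I}}$; once this is in place, the triviality of the one-dimensional TI classical marginal problem~\cite{oneD} closes the argument. The genuinely non-trivial mathematical input---that LTI implies global TI extendibility in one dimension---is imported from~\cite{oneD} and assumed here.
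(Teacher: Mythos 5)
Your route is the same as the paper's: apply the maps of \cite{NOP} to a feasible $\hat{\rho}^{({\lev})}$, observe that the linear LTI constraint on the matrix variable descends to the generating measure $\tilde{p}_{\hat{I}}$ because the maps act site-locally, and then import from \cite{oneD} the fact that a 1D LTI distribution admits a global TI extension, which through Eq.~(\ref{separable}) produces the TI separable state $\sigma$ with $\sigma_{\hat{I}}=\tilde{\rho}$. That part, including the bound $2\{1-(1-\epsilon({\lev},d))^k\}=O(kd^2/{\lev}^2)$, matches the paper's argument.

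There is, however, one genuine flaw in your justification: the claim that the construction ``uses only the symmetric-extension condition \ldots and not the PPT constraints (which only sharpen the relaxation)''. The maps $\Omega^{d,{\lev}}_\phi$ of \cite{NOP} guarantee a \emph{positive} output, Eq.~(\ref{posi}), only for states that are positive semidefinite under transposition of $\lfloor {\lev}/2\rfloor$ of the extended copies; the map itself involves a partial transposition, so with mere positivity of $\hat{\rho}^{({\lev})}$ the induced $\tilde{p}_{\hat{I}}$ may be a signed measure. In that case $\tilde{\rho}$ need not be separable, the LTI ``distribution'' need not be a probability distribution, and the classical LTI$\to$TI extension theorem of \cite{oneD}, which concerns genuine probability measures, can no longer be invoked. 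Moreover, the $O(d^2/{\lev}^2)$ rate encoded in $\epsilon({\lev},d)$ of Eq.~(\ref{epsilon_def}) is specific to PPT symmetric extensions; symmetric extensions alone only give the slower de Finetti-type $O(1/{\lev})$ convergence. So your argument needs the hierarchy $\hat{\hier}$ to carry the PPT conditions on subsets of the extended copies, exactly as $\hier$ does (the display (\ref{TI_SDP}) omits them, but the proof via Proposition~\ref{conv_prop} presupposes them); with that correction the rest of your proof goes through as in the paper.
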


\begin{proof}
In the proof of Proposition \ref{conv_prop} (see Appendix \ref{proofProp}), we applied the linear maps defined in \cite{NOP} to the matrix variables $\{\rho^{({\lev})}_I\}_I$ appearing in the definition of $\hier$ in order to generate the separable ensemble $\{\tilde{\rho}_I\}_I$. As explained in Appendix \ref{proofProp}, these linear maps have the convenient property of translating linear relations over the matrix variables $\{\rho^{({\lev})}_I\}_I$ into analogous linear relations over the distributions $\{\tilde{p}_I(\phi_I)\}_I$ generating $\{\tilde{\rho}_I\}_I$. Consequently, if we apply said maps to $\hat{\rho}^{({\lev})}$, we find that, for any $\hat{\rho}\in\hat{\hier}^{\lev}$, there exists a separable state $\tilde{\rho}$, with $\|\tilde{\rho}-\hat{\rho}\|_1\leq O\left(\frac{kd^2}{{\lev}^2}\right)$ generated by a measure $\tilde{p}_{\hat{I}}(\phi_{\hat{I}})$ such that

\be
\int d\varphi\tilde{p}_{\hat{I}}(\varphi,\phi_2,...,\phi_{k})=\int d\varphi\tilde{p}_{\hat{I}}(\phi_1,...,\phi_{k-1}, \varphi).
\label{LTI}
\ee
\noindent That is, the LTI of the variable $\hat{\rho}^{({\lev})}$, enforced by the last relation in (\ref{TI_SDP}), is inherited by the distribution $\tilde{p}_{\hat{I}}$ generating the separable approximation to $\hat{\rho}$.

It is a well-known result in statistical physics (see, e.g., \cite{oneD} for a proof) that every $k$-variate distribution $\tilde{p}_{\hat{I}}$ satisfying 1D LTI admits a 1D TI extension $\tilde{p}$. Hence, $\tilde{\rho}$ admits a TI separable extension $\sigma$, and we arrive at the statement of the theorem.

\end{proof}

To test how the SDP hierarchy $\hat{\hier}$ performs in practice, we take the entanglement witness for translation-invariant states presented in \cite{oneD}. There, it is shown that, if $\rho_{1,2}$ is the nearest-neighbor reduced density matrix of a separable infinite TI state, then it must satisfy $\tr\{\rho_{12}(X\otimes Z)\}\geq -\frac{1}{2}$. Moreover, this bound is tight. We verify that a second order relaxation ${\lev}=2$ recovers this bound, up to the solver's precision. Note that, since $X\otimes Z$ is not invariant under reflection, we cannot invoke Proposition \ref{trivial} to solve this problem. Moreover, as observed in \cite{oneD}, $\tr\{\sigma_{12}(X\otimes Z)\}=-1$ for the separable and LTI state $\sigma_{12}=\frac{1}{2}(\proj{+}\otimes\proj{-i}+\proj{-}\otimes\proj{+i})$, with $X\ket{\pm}=\pm\ket{\pm}, Y\ket{\pm i}=\pm\ket{\pm i}$.

\subsection{2D translation invariance}

Extending the 1D argument for TI systems to the 2D case without further symmetries is non-trivial. To begin, using the symmetrization technique introduced in \cite{oneD}, one can prove that the marginal of a 2D separable TI state admits a separable decomposition generated by a TI measure. Hence, for $\hat{I}=\{(x,y): 1\leq x\leq k, 1\leq y\leq l\}$, we can assume that the distribution $p_{\hat{I}}$ in (\ref{truqui}) is LTI. Hence the corresponding SDP variable $\hat{\rho}^{({\lev})}$ is subject to the LTI conditions:

\begin{align}
&\tr_{\{(1,y)^{\lev}:y\}}(\hat{\rho}^{({\lev})})=\tr_{\{(k,y)^{\lev}:y\}}(\hat{\rho}^{({\lev})}),\nonumber\\
&\tr_{\{(x,1)^{\lev}:x\}}(\hat{\rho}^{({\lev})})=\tr_{\{(x,l)^{\lev}:x\}}(\hat{\rho}^{({\lev})}).
\label{LTI_quantum2}
\end{align}

\noindent Applying the linear maps from \cite{NOP} over any feasible matrix variable $\hat{\rho}^{({\lev})}$, we arrive at a separable decomposition for $\hat{\rho}$ generated by a LTI distribution $\tilde{p}(\phi_{\hat{I}})$.

This is the point where problems appear. Unfortunately, in two spatial dimensions LTI does \emph{not} imply the existence of a global TI measure~\cite{TI}. In fact, the simplest problem of deciding whether a pair of distributions $P_h,P_v$ are the horizontal and vertical nearest-neighbor distributions of a 2D translation-invariant model in the square lattice is undecidable \cite{undecidable}. Defining a complete hierarchy of SDPs to characterize entanglement in the 2D TI case would require considering sequences of extensions $(\hat{\rho}_{I_j}^{({\lev})})_j$ involving an ever-growing number of sites $I_j$.

A compromise between full reflection symmetry and raw TI is a scenario where the considered system is only symmetric with respect to reflections over the vertical axis, see Figure \ref{squareLattice}. Let us then consider an infinite square lattice, call $\hat{I}$ the $l\times 2$ plaquette $\{(x,y): x=1,2, y=1,...,l\}$, and let $\hat{\rho}$, defined on the sites $\hat{I}$, be the input of an entanglement marginal problem. We denote by $U, L$ the subsets of $\hat{I}$ given by $\{(x,y):x=1,2; y=1,...,k-1\}$, $\{(x,y):x=1,2; y=2,...,k\}$, respectively.

\begin{figure}
  \centering
  \includegraphics[width=6cm]{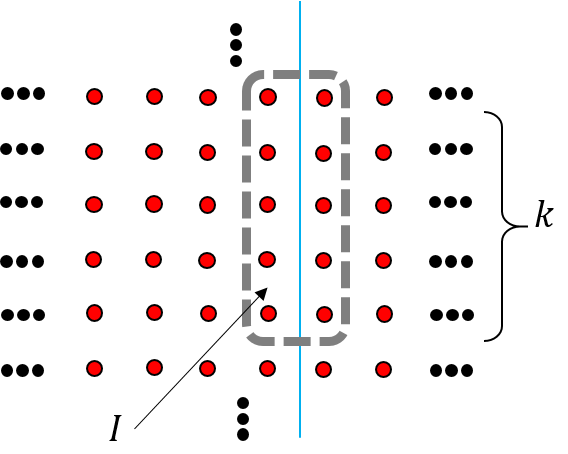}
  \caption{\textbf{The 2D case.} Subsystems (in red) are located at each site of an infinite square lattice, subject to symmetry under reflections of the blue axis and translation invariance. The set $\hat{I}$ of sites is circled with dashed lines.}
  \label{squareLattice}
\end{figure}

Acknowledging the LTI and invariance under reflections over the vertical axis of the underlying distribution $\hat{p}(\phi)$ generating the separable state $\hat{\rho}$, we arrive, via (\ref{truqui}), at a hierarchy of SDP relaxations whose ${\lev}^{th}$ level is:

\begin{align}
&\exists \hat{\rho}^{({\lev})}\in B(\H_\text{sym}^{{\lev},d})^{\otimes 2k},\hat{\rho}^{({\lev})}\succeq 0,\nonumber\\
&\tr_{\{(1,1),(2,1)\}^{\lev}}(\hat{\rho}^{({\lev})})= \tr_{\{(1,k),(2,k)\}}^{\lev}(\hat{\rho}^{({\lev})}),\nonumber\\
&\mbox{SWAP}\hat{\rho}^{({\lev})}\mbox{SWAP}^\dagger = \hat{\rho}^{({\lev})},
\end{align}
\noindent where $\mbox{SWAP}$ is the operator permuting the Hilbert spaces of sites $(0,y)$ and $(1,y)$, for $y=0,...,k-1$. The convergence of this hierarchy follows from the proof of proposition \ref{conv_prop} and the result, proven in \cite{twoD}, that any distribution $P_{\hat{I}}(a_{\hat{I}})$ satisfying reflection symmetry along the vertical axis in addition to LTI admits a TI extension that is also symmetric with respect to vertical reflections. 

We arrive at the following theorem.

\begin{theo}
Let $\hat{\rho}$ satisfy the $\lev^{th}$ level of the above hierarchy. Then, there exists a separable state $\sigma$ for the infinite square lattice, invariant under translations and vertical reflections, such that

\be
\|\sigma_{\hat{I}}-\hat{\rho}\|\leq O\left(\frac{kd^2}{\lev^2}\right).
\ee
\end{theo}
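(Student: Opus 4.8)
The plan is to mirror the proof of the previous theorem (the 1D TI case), replacing its 1D classical marginal input by the corresponding 2D result of \cite{twoD}. Concretely, I would start from a feasible matrix variable $\hat{\rho}^{({\lev})}$ witnessing that $\hat{\rho}$ satisfies the $\lev^{th}$ level of the hierarchy, and apply to it the family of linear maps of \cite{NOP} used in the proof of Proposition \ref{conv_prop} (see Appendix \ref{proofProp}). As established there, these maps send $\hat{\rho}^{({\lev})}$ to a fully separable state $\tilde{\rho}$, generated by a measure $\tilde{p}(\phi_{\hat{I}})$, with $\|\tilde{\rho}-\hat{\rho}\|_1\leq O\left(\frac{kd^2}{\lev^2}\right)$; the exponent $kd^2$ arises by applying the single-site estimate $\epsilon(\lev,d)$ of \eqref{epsilon_def} over the $2k$ sites of the $2\times k$ plaquette and expanding the product in \eqref{poly_approx} to leading order in $1/\lev^2$. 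This already yields the desired trace-norm bound; what remains is to promote $\tilde{\rho}$ to a genuine translation-invariant, vertically reflection-invariant separable state of the whole lattice.

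The decisive ingredient is that these maps are \emph{linear}, and hence convert every affine constraint satisfied by $\hat{\rho}^{({\lev})}$ into the analogous constraint on the generating distribution $\tilde{p}$. Applying this to the two conditions defining the hierarchy, I would argue two things. First, the vertical LTI relation $\tr_{\{(1,1),(2,1)\}^{\lev}}(\hat{\rho}^{({\lev})})=\tr_{\{(1,k),(2,k)\}^{\lev}}(\hat{\rho}^{({\lev})})$ descends to the statement that marginalizing $\tilde{p}$ over the bottom row equals marginalizing over the top row, i.e.\ $\tilde{p}$ is LTI along the vertical direction, exactly as in \eqref{LTI}. Second, since $\mbox{SWAP}$ permutes whole single-site tensor factors, the covariance $\mbox{SWAP}\,\hat{\rho}^{({\lev})}\,\mbox{SWAP}^\dagger=\hat{\rho}^{({\lev})}$ is mapped to invariance of $\tilde{p}$ under exchange of the two columns, i.e.\ symmetry under reflections about the vertical axis. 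The one point deserving care here is to verify that the maps of \cite{NOP} act site-wise, so that a site permutation at the level of $\hat{\rho}^{({\lev})}$ commutes with them and descends to the same permutation of the arguments of $\tilde{p}$; this follows directly from their tensor-product structure.

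With both classical symmetries in hand, I would invoke the result of \cite{twoD} quoted in the text: any distribution $\tilde{p}(\phi_{\hat{I}})$ on the $2\times k$ plaquette that is LTI and invariant under vertical reflections admits an extension $\tilde{p}^{\text{ext}}$ to the full square lattice that is translation invariant and invariant under vertical reflections. Via \eqref{separable}, $\tilde{p}^{\text{ext}}$ generates a fully separable lattice state $\sigma$ that is TI and vertically RI and whose marginal on $\hat{I}$ is precisely $\tilde{\rho}$. Combining $\sigma_{\hat{I}}=\tilde{\rho}$ with the earlier estimate gives $\|\sigma_{\hat{I}}-\hat{\rho}\|_1\leq O\left(\frac{kd^2}{\lev^2}\right)$, which is the claim.

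The trace-norm bound and the bookkeeping of how the linear maps act are routine, being inherited essentially verbatim from Proposition \ref{conv_prop}. The genuinely substantive step — and where I expect the real content to lie — is the classical marginal statement of \cite{twoD}: it is precisely what makes \emph{this} hierarchy complete, in sharp contrast with the plain 2D TI case, where (as the text recalls) LTI alone fails to guarantee a TI extension and the associated marginal problem is even undecidable \cite{undecidable}. Thus the whole argument hinges on the extra vertical-reflection symmetry being exactly enough to restore solvability of the underlying classical marginal problem, and the crux is to confirm that the SDP symmetrization faithfully reproduces the hypotheses of that classical result — both LTI and reflection symmetry simultaneously, on the correct plaquette.
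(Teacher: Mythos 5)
Your proposal is correct and follows essentially the same route as the paper: apply the linear maps of \cite{NOP} from the proof of Proposition \ref{conv_prop} to the feasible variable $\hat{\rho}^{({\lev})}$, note that the LTI and SWAP constraints descend to local translation invariance and vertical-reflection symmetry of the generating distribution, and then invoke the classical extension result of \cite{twoD} to obtain the TI, vertically reflection-invariant separable extension with the stated $O(kd^2/\lev^2)$ trace-norm bound. No gaps to report.
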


The characterization of TI distributions with reflection symmetry presented in \cite{twoD} can be easily extended to arbitrary spatial dimensions. The general result would state that any probability distribution $P$ for the variables of the sites $\{(x_1,x_1,...,x_{D}):x_1\in \{1,...,k\},x_2,...,x_{D}\in\{1,2\}\}$ for a hypercubic lattice admits a TI extension with symmetry under the reflection of the last $D-1$ axes iff $P$ is LTI along the first axis and symmetric under the inversion of the rest. Implementing the $2^{nd}$ level of the corresponding SDP hierarchy to tackle the 3D entanglement marginal problem is, however, unrealistic in a normal desktop.

\section{Conclusion}
\label{conclusion}
In this work, we have extensively studied the entanglement marginal problem, namely the problem of detecting the entanglement of an overall quantum state given access to some ensemble of reduced density matrices. We have presented a general framework to derive hierarchies of SDP relaxations of the set of ensembles admitting a separable extension. We related the convergence or completeness of these hierarchies to the classical marginal problem, and exploited this connection to solve the entanglement marginal problem for condensed matter systems of arbitrary size in 1D and also in 2D (under reflection symmetries). In addition, the introduced tools allowed us to show that, for some thermodynamical scenarios, the entanglement marginal problem becomes trivial. In fact, the set of two-qubit nearest-neighbor marginals of a 1D TI, RI separable state can be characterized analytically.

Remarkably, the computational complexity of the methods presented in this work scales very well with the system size, when it depends on it at all. We are thus optimistic that our results will find application to assess in a scalable way the entanglement of the states generated in current and near-term quantum setups, such as Noisy Intermediate Scale Quantum (NISQ) devices.

\vspace{10pt}
\noindent \emph{Acknowledgements}

MN was supported by the Austrian Science fund (FWF) stand-alone project P 30947. FB was supported by the Deutsche Forschungsgemeinschaft (DFG, German Research Foundation) – Project number 414325145 in the framework of the Austrian Science Fund (FWF): SFB F71. AA was supported by the ERC AdG CERQUTE, the AXA Chair in Quantum Information Science, the Spanish MINECO
(Severo Ochoa grant SEV-2015-0522), the
Generalitat de Catalunya (CERCA Program, QuantumCAT and SGR 1381) and the Fundaci\'o{}
Privada Cellex and Mir-Puig.

\bibliographystyle{apsrev4-1}
\bibliography{ent1DBib_doi}

\begin{appendix}

\section{Infinite systems}
\label{infinite_systems_app}
In this section, we provide simple operational definitions for classical distributions, quantum states and separable quantum states describing physical systems with a countable set of subsystems $\A$, finite or infinite.

\begin{defin}
\noindent\textbf{Quantum states}

Let $\A$ be a countable alphabet, and, for each $\alpha\in\A$, let $d_\alpha\in\mathbb{N}$ denote the Hilbert space dimension of system $\alpha$. Let $\rho$ be a function mapping every finite subset ${\cal B}\subset \A$ into a trace-one positive semidefinite matrix $\rho_{{\cal B}}\in \bigotimes_{\beta\in{\cal B}} B(\C^{d_{\beta}})$, with the property

\be
\tr_{{\cal B}\setminus {\cal C}}(\rho_{{\cal B}})=\tr_{{\cal C}\setminus {\cal B}}(\rho_{{\cal C}}),
\ee
\noindent for all finite ${\cal B},{\cal C}\subset \A$. 

We say that $\rho$ is a \emph{quantum state} for system $\A$.
\end{defin}

For finite $\A$, this definition coincides with the standard definition of a quantum state. For infinite alphabets, though, we regard a quantum state as an infinite set of reduced density matrices, each acting on finitely many systems. One can construct an analogous definition for classical probability distributions, by replacing positive semidefinite matrices with classical measures and partial traces with integrals over finite sets of random variables.

The definition of separability in infinite dimensional systems follows similar lines.

\begin{defin}
\noindent\textbf{Separable quantum states}

Let $\A$ be a countable alphabet, and let $\rho$ be a state of $\A$. If, for all finite subsets ${\cal B}\subset \A$, $\rho_{{\cal B}}$ is a separable state, then we say that $\rho$ is separable.
\end{defin}

Using a Cantor-like argument, one can see that $\rho$ is a separable state iff there exists a distribution $p$ such that, for all finite ${\cal B}\subset \A$,

\be
\rho_{{\cal B}}=\int d\phi_{{\cal B}}p_{{\cal B}}(\phi_{{\cal B}})\bigotimes_{\beta\in{\cal B}}\proj{\phi_\beta}.
\label{global_separable}
\ee

This is evident if $\A$ is finite. To construct such a distribution in the $|\A|=\infty$ case (more specifically: a function $p$ mapping any finite subset ${\cal B}\subset\A$ into a distribution $p_{{\cal B}}(\phi_{{\cal B}})$ such that eq. (\ref{global_separable}) holds and such that, for any finite $\B,{\cal C}\subset \A$, $p_{\cal B},p_{\cal C}$ are locally compatible), we start from the premise that all states $\rho_{\B}$ are separable. Consequently, for all finite $\B\subset \A$, there exist (not necessarily locally compatible) distributions $\tilde{p}_{\cal B}$ satisfying 

\be
\rho_{{\cal B}}=\int d\phi_{{\cal B}}\tilde{p}_{{\cal B}}(\phi_{{\cal B}})\bigotimes_{\beta\in{\cal B}}\proj{\phi_\beta}.
\label{global_separable_tilde}
\ee
\noindent Next we consider an increasing sequence $(\B^i)_i$ of finite subsets of $\A$ such that $\cup_i\B^i=\A$, as well as the sequence of distributions $\tilde{p}_i=\tilde{p}_{\B_i}$, and partition the measure space of the variables $\phi_{\B_i}$ into a finite set of measurable subsets ${\cal F}_i$, with ${\cal F}_i={\cal F}_{i-1}\times {\cal F}'_i$. By the Bolzano-Weierstrass theorem, there exists a subsequence $(s^1_j)_j$ of $(i)_i$ such that the limit

\be
p_{{\cal B}_1}(\Phi):=\lim_{j\to\infty}\int_{\Phi} d_{\phi_{\B_1}}\int d\phi_{\B_{s^1_j}\setminus \B_1}\tilde{p}_{s^1_j}(\phi_{\B_{s^1_j}})
\ee
\noindent exists, for all $\Phi\in {\cal F}_1$. Now, starting from the sequence $(s^1_j)_j$, we invoke the Bolzano-Weierstrass theorem again to show that there exists a subsequence $(s^2_j)_j$ of $(s^1_j)_j$ with the property that 

\begin{align}
&p_{{\cal B}_2}(\Phi)=\lim_{j\to\infty}\int_{\Phi_1\times\Phi'_2} d_{\phi_{\B_2}}\int d\phi_{\B_{s^2_j}\setminus \B_2}\tilde{p}_{s^2_j}(\phi_{\B_{s^2_j}})\nonumber\\
&=:p_{\B_2}(\Phi_1\times \Phi_2')
\end{align}
\noindent exists for all $\Phi_1\times \Phi_2'\in {\cal F}_2$. Moreover, since $(s^2_j)_j$ is a subsequence of $(s^1_j)_j$, we have that 

\be
\sum_{\Phi_2'}p_{\B_2}(\Phi_1\times \Phi_2')=p_{\B_1}(\Phi_1).
\ee
\noindent That is, within the coarse-graining given by ${\cal F}^2$, the distributions $p_{\B_1}$ and $p_{\B_2}$ so defined are locally compatible.

Iterating this argument, we arrive at a sequence of nested subsequences with the property that, for the $k^{th}$ subsequence $(s^k_j)_j$, the limit

\be
\lim_{j\to \infty}\int_{\Phi} d\phi_{\B_k}\int_{\B_{s^k_j}\setminus\B_k}\tilde{p}_{\B_{s^k_j}}(\phi_{\B_{s^k_j}})=:p_{\B_k}(\Phi)
\ee
\noindent exists for all $\Phi\in {\cal F}_k$, and all so-generated coarse-grained probability distributions $\{p_{\B_k}(\Phi):\Phi\in {\cal F}_k$ are locally compatible. Moreover, if we consider the sequence $(s_j)_j\equiv(s^j_j)_j$, we have that, for all $k$,

\be
\lim_{j\to \infty}\int_{\Phi} d\phi_{\B_k}\int_{\B_{s_j}\setminus\B_k}\tilde{p}_{\B_{s_j}}(\phi_{\B_{s_j}})=p_{\B_k}(\Phi).
\ee

Within the coarse-grainings ${\cal F}^k$, we hence arrive at a global definition of $p$. To arrive at a more fine-grained description of $p$, we consider sub-partitions $\bar{{\cal F}}_k$ of ${\cal F}_k$ such that $\bar{{\cal F}}_k=\hat{{\cal F}}_{k-1}\times \hat{{\cal F}}_{k}'$ and apply the same argument as before over the sequence $(s_j)_j$. Iterating the argument over and over, we find that there exists a sequence $(\hat{s}_j)_j$ such that 

\be
\lim_{j\to \infty}\int_{\Phi} d\phi_{\B_k}\int_{\B_{\hat{s}_j}\setminus\B_k}\tilde{p}_{\B_{\hat{s}_j}}(\phi_{\B_{\hat{s}_j}})=:p_{\B_k}(\Phi)
\ee
\noindent exists for any measurable set $\Phi$. $(p_{\B_j})_j$ are, by construction, locally compatible. Moreover, since $(\tilde{p}_{\B_j})_j$ satisfy eq. (\ref{global_separable_tilde}), it follows that $p$ satisfies (\ref{global_separable}).

\section{A simplified hierarchy for the entanglement marginal problem}
\label{simplified}
Suppose that there exists a subset $\Ind_+$ of $\Ind$ such that any $I\in \Ind_+$ contains a letter $\alpha_I$ that is not present in any other set $J\in\Ind$. Define $\bar{I}$ as $I\setminus\{\alpha_I\}$ for $I\in\Ind_+$ or $I$, otherwise, and, for any $I\not\in\Ind_+$, let $\alpha_I\not\in\A$ denote a 1-dimensional Hilbert space. This notation will become clear below.

Let $\{\rho_I\}_I$ admit a separable extension generated by the measure $p(\phi)$, and define the states

\be
\bar{\rho}^{({\lev})}_{I}\equiv\int p_{I}(\phi_I)d\phi_I\bigotimes_{\alpha\in \bar{I}}\proj{\phi_\alpha}^{\otimes {\lev}}\otimes \proj{\phi_{\alpha_I}}.
\label{truqui2}
\ee

\noindent Note that whenever $I\not\in\Ind_+$, the last element $\alpha_I$ in the previous equation corresponds to a one-dimensional Hilbert space and can in practice be removed. It is however convenient to write it to have a unified notation for the two cases $I\in\Ind_+$ and $I\not\in\Ind_+$. Then the states $\{\bar{\rho}^{({\lev})}_I\}$ satisfy the conditions

\begin{enumerate}[label=(\roman*)]
\item
$\tr_{\bar{I}^{{\lev}-1}}(\bar{\rho}^{({\lev})}_{I})=\rho_{I}$.

\item
$\bar{\rho}^{({\lev})}_{I}$ is Positive under Partial Transposition (PPT) across all bipartitions.
\item
$\bar{\rho}^{({\lev})}_{I}\in B(\bigotimes_{\alpha\in \bar{I}}\H_\text{sym}({\lev},d_\alpha)\otimes\H_{\alpha_I})$.
\item
$\tr_{(I\setminus J)^{\lev},\alpha_I}(\bar{\rho}^{({\lev})}_{I})=\tr_{(J\setminus I)^{\lev},\alpha_J}(\bar{\rho}^{({\lev})}_{J})$ (after appropriately reordering the systems of one of the sides).
\end{enumerate}

As in the definition of $\hier$, we relax the property that $\{\rho_I\}_I$ admit a separable extension by demanding that there exist positive semidefinite matrices $\{\bar{\rho}^{({\lev})}_I\}_I$ satisfying conditions (i)-(iv). The resulting SDP hierarchy $\bar{\hier}(\Ind|\{\rho_I\}_I)$ is weaker than $\hier(\Ind|\{\rho_I\}_I)$, in the sense that $\{\rho_I\}_I\in \hier^{\lev}$ implies $\{\rho_I\}_I\in \bar{\hier}^{\lev}$. However, for some configurations of $\Ind$, verifying that $\{\rho_I\}_I\in\bar{\hier}^{\lev}(\Ind)$ requires considerably less computational resources. This is because, as said, the index $\alpha_I$ is not extended in this hierarchy.

\section{From matrix variables to separable states and probability distributions}
\label{proofProp}
In this Appendix, we prove Proposition \ref{conv_prop}. We also show that linear relations over the matrix variables in the hierarchy $\hier$ described in the main text are inherited by the measure generating the separable approximation to $\{\rho_I\}_I$. For simplicity, we first prove that Proposition \ref{conv_prop} holds for any ensemble $\{\rho_I\}_I\in\hier^{\lev}$. The extension to the weaker condition $\{\rho_I\}_I\in\bar{\hier}^{\lev}$, defined in Appendix \ref{simplified}, will follow easily. 

To analyze the speed of convergence of the DPS hierarchy~\cite{DPS1, DPS2, DPS3}, one of us introduced in~\cite{NOP} a family of linear maps $\Omega^{d,{\lev}}_\phi:B(\H_\text{sym}^{{\lev},d})\to B(\C^d)$ with the property that, for all Hilbert spaces $\H$ and all states $\omega_{1^{\lev}A}\in B(\H_\text{sym}({\lev},d)\otimes \H)$ positive semidefinite under the transposition of $\lfloor \frac{{\lev}}{2}\rfloor$ of the ${\lev}$ symmetric systems, the following inequality holds:

\be
(\Omega^{d,{\lev}}_{\phi}\otimes \id_{B(\H)})(\omega_{1^{\lev}A})=\proj{\phi}_1\otimes (\omega_{\phi})_{A}\geq 0.
\label{posi}
\ee
\noindent Furthermore, the map $\Omega^{d,{\lev}}\equiv \int d\phi\Omega^{d,{\lev}}_{\phi}$ satisfies

\begin{align}
(\Omega^{d,{\lev}}\otimes \id_{B(\H)})(\omega_{1^{\lev}A})=&(1-\epsilon(\lev,d))\omega_{1^{\lev}A} +\nonumber\\&\epsilon(\lev,d)\frac{\id_1}{d}\otimes\omega_{A},
\label{approx}
\end{align}
\noindent with $\epsilon(\lev,d)$ defined as in (\ref{epsilon_def}).

Now, suppose that the ensemble $\{\rho_{I}\}_{I}$ admits extensions $\{\rho_I^{({\lev})}\}$ satisfying the conditions (i)-(iv) in the main text. We simultaneously apply the maps $\int d\phi\Omega^{d_\alpha,{\lev}}_\phi$, for $\alpha\in \A$, to each of the states $\{\rho_I^{({\lev})}\}$. This results in the states $\tilde{\rho}_I=\int d\phi_I \tilde{p}(\phi_I)\bigotimes_{\alpha\in I}\proj{\phi_\alpha}$, with $\tilde{p}(\phi_I)= \tr(\bigotimes_{\alpha\in I}\Omega^{d_\alpha,{\lev}}_{\phi_i}\rho_I^{({\lev})})$. Note that, if $\{\rho^{({\lev})}_{I}\}_{I}$ satisfies linear constraints of the form 

\be
\sum_{I\in\Ind,J\subset I}c_{IJ}\tr_{(I/J)^{\lev}}\rho^{({\lev})}_{I}=0,
\ee

\noindent such as local compatibility, or relations (\ref{LTI_quantum}), (\ref{LTI_quantum2}), then so will the ensemble of measures $\{\tilde{p}_I\}_I$. Finally, due to eq. (\ref{approx}), 

\begin{align}
&\tilde{\rho}_I=\prod_{\alpha\in I}(1-\epsilon(d_\alpha,\lev))\rho_I +\nonumber\\
& \left(1-\prod_{\alpha\in I}(1-\epsilon(d_\alpha,\lev))\right)\sigma_I,
\end{align}
\noindent for some normalized state $\sigma_I$. Subtracting $\rho_I$ and invoking the triangle inequality, we arrive at equation (\ref{poly_approx}) in Proposition \ref{conv_prop}.

To see that Proposition \ref{conv_prop} also applies to the simplified hierarchy $\bar{\hier}$ with no linear constraints on $\{\rho_I^{({\lev})}\}_I$ besides local compatibility, just apply to each variable $\bar{\rho}^{({\lev})}_I$ the maps $\int d\phi\Omega^{d_\alpha,{\lev}}_\phi$, for $\alpha\in\bar{I}$. The result will be a separable state of the form 

\be
\tilde{\rho}_I=\int p_{\bar{I}}(\phi_{\bar{I}})\left(\bigotimes_{\alpha\in \bar{I}}\proj{\phi_\alpha}\right)\otimes \sigma(\phi_{\bar{I}}),
\ee
\noindent where $\sigma(\phi_{\bar{I}})$ is the state of system $\alpha_I$ after we subject systems in $\bar{I}$ to the map $\bigotimes_{\alpha\in \bar{I}}\Omega^{d_\alpha,{\lev}}_{\phi_{\alpha}}$. Writing 

\be
\sigma(\phi_{\bar{I}})=\int d\phi_{\alpha_I}\tilde{p}(\phi_{\alpha_I}|\phi_{\bar{I}})\proj{\phi_{\alpha_I}},
\ee
\noindent we find that the separable state $\tilde{\rho}_I$ is generated by the distribution $\tilde{p}(\phi_I)\equiv \tilde{p}_{\bar{I}}(\phi_{\bar{I}})\tilde{p}(\phi_{\alpha_I}|\phi_{\bar{I}})$, with 

\be
\tilde{p}_{\bar{I}}(\phi_{\bar{I}})=\tr\left(\bigotimes_{\alpha\in \bar{I}}\Omega^{d_\alpha,{\lev}}_{\phi_{\alpha}}(\bar{\rho}^{({\lev})})\right).
\ee

Since, by assumption, system $\alpha_I$ does not belong to any other $J\in\Ind, J\not=I$, local compatibility of $\{\bar{\rho}^{({\lev})}_I\}_I$ implies local compatibility of the distributions $\{\tilde{p}_{I}\}_I$.

\section{Global compatibility for chordal graphs}
\label{runningApp}

The purpose of this section is to prove that, if the dependency graph ${\cal G}$ of $\Ind$ is chordal and the elements of $\Ind$ correspond to its maximal cliques, then local compatibility of distributions $\{p_I(\phi_I)\}_{I\in\Ind}$ implies global compatibility.

It is a standard result in combinatorics \cite{chordal} that the maximal cliques of a chordal graph can be ordered as $I_1,...,I_m$ in such a way that, for all $j\in\{1,...,m-1\}$, 

\be
I_{j+1}\cap \left(\bigcup_{k=1}^{j} I_k \right)\subset I_s,
\label{running}
\ee
\noindent for some $s\leq j$. This condition is known as the \emph{running intersection property}.

We next prove that the running intersection property, together with local compatibility, implies the existence of a global distribution $p(\phi)$. Call $V, \Lambda$ the union and intersection of the sets $I_1, I_2$, respectively. Note that, given the sets $\{p_I(\phi^{I})\}_I$, $p(\phi_{\Lambda})$ is well defined by local compatibility. Now, define $p_V(\phi_{V}):=p(\phi_{I_1})p(\phi_{I_2})/p(\phi_{\Lambda})$. Summing/integrating over the variables $\phi_i$, with $i\in I_2\setminus \Lambda$, we obtain $p(\phi_{I_1})$. Similarly, summing/integrating over $\phi_i$, with $i\in I_1\setminus \Lambda$, we obtain $p_{I_2}(\phi_{I_2})$. Hence $p_V$ admits $p_{I_1}, p_{I_2}$ as marginals. Furthermore, consider any set $I_j$, with $j>2$. By the running intersection property, either $I_j\cap V\subset I_1$ or $I_j\cap V\subset I_2$, and so $p_{I_j\cap V}(\phi_{I_j\cap V})$ is well defined. It thus follows that the sets of indices $\{V\}\cup \{I_j\}_{j=2}^m$ and the distributions $\{p_V\}\cup \{p_{I_j}\}_{j=3}^m$ satisfy, respectively, the running intersection property and local compatibility. Iterating this procedure $m-2$ times, we arrive at a global probability distribution.

We next illustrate the construction of the overall distribution with an example. Think of the open spin chain discussed in Section \ref{lineSec}, where $\Ind=\{I_j:j =1,...,n-1\}$, with $I_j=\{j,j+1\}$. It can be verified that the sequence of sets $(I_j)_j$ satisfies the running intersection property (\ref{running}). Now, let $\{p_{\{j,j+1\}}(\phi_j,\phi_{j+1})\}_{j=1}^{n-1}$ be a set of locally compatible distributions. Following the procedure sketched above, a global distribution for the variables at sites $1,2,3$ is given by

\be
p_{\{1,2,3\}}(\phi_1,\phi_2,\phi_3)=\frac{p_{\{1,2\}}(\phi_1,\phi_2)p_{\{2,3\}}(\phi_2,\phi_3)}{p_2(\phi_2)},
\label{global}
\ee
\noindent where

\begin{align}
p_2(\phi_2):=&\int d\phi_1p_{\{1,2\}}(\phi_1,\phi_2)=\nonumber\\
&\int d\phi_3p_{\{2,3\}}(\phi_2,\phi_3).
\label{lc_system2}
\end{align}
\noindent Note that the last relation holds by local compatibility.

To see why (\ref{global}) is an extension of $p_{\{1,2\}}, p_{\{2,3\}}$, integrate $\phi_1$ in both sides of (\ref{global}). We find, by (\ref{lc_system2}), that  

\be
\int d\phi_1p_{\{1,2,3\}}(\phi_1,\phi_2,\phi_3)=p_{\{2,3\}}(\phi_2,\phi_3).
\ee
\noindent Similarly, integrating $\phi_3$ we arrive at

\be
\int d\phi_3p_{\{1,2,3\}}(\phi_1,\phi_2,\phi_3)=p_{\{1,2\}}(\phi_1,\phi_2).
\ee

By increasing the number of variables of the overall probability distribution in this fashion, we find that 

\be
p(\phi):= p_{\{1, 2\}}(\phi_{1},\phi_{2})\prod_{j=2}^{n-1}\frac{p_{\{j, j+1\}}(\phi_{j},\phi_{j+1})}{p_{j}(\phi_{j})}
\ee
\noindent is an extension of the ensemble $\{p_{\{j, j+1\}}:j=1,...,n-1\}$.

\end{appendix}

\end{document}